\newtheorem{thm}{Theorem}
\theoremstyle{definition}
\newtheorem{defn}{Definition}
\newtheorem{lem}{Lemma}
\newtheorem{prop}{Proposition}
\begin{document}

\begin{comment}
\newcommand\BackgroundPicture{%
   \put(0,0){%
     \parbox[b][\paperheight]{\paperwidth}{%
       \vfill
       \centering
       \includegraphics[width=\paperwidth,height=\paperheight, keepaspectratio]{image}%
       \vfill
     }}}

  \AddToShipoutPicture{\BackgroundPicture}

\end{comment}
% The picture is centered on the page background

\date{}

%07 June2012
%Non-Linear--->delete in the title

\title{On the  Conjecture on  APN  Functions}

\author{Moises  Delgado\thanks{Department of Mathematics,
University of Puerto Rico (UPR), Rio Piedras Campus, San Juan,  PR 00936,
 USA. moises.delgado@upr.edu.}\\
Heeralal  Janwa \thanks{Department of Mathematics (UPR), University of
Puerto Rico, Rio Piedras Campus, San Juan,  PR 00936, USA.
heeralal.janwa@upr.edu}} \maketitle

\begin{centering}
\textbf{Abstract} \\
\end{centering}
An almost perfect nonlinear (APN) function  (necessarily  a polynomial function)  on a finite field   $\mathbb{F}$  is called exceptional APN,   if it is also APN on infinitely many extensions
of  $\mathbb{F}$. \textcolor{black}{ In this article we consider the most studied case of $\mathbb{F}=\mathbb{F}_{2^n}$.}
 A conjecture of Janwa-Wilson and McGuire-Janwa-Wilson (1993/1996)\textcolor{black}{,}   settled  in 2011\textcolor{black}{,}  was that the only
\textcolor{black}{exceptional  monomial} APN \textcolor{black}{functions} are \textcolor{black}{the} monomials $x^n$\textcolor{black}{,} where $n=2^i+1$  or
$n={2^{2i}-2^i+1} $ (the Gold or the Kasami exponents respectively).
\textcolor{black}{A   subsequent conjecture states that any  exceptional  APN function is one of the monomials just described.}
\textcolor{black}{One of our \textcolor{black}{result} \textcolor{black}{is that}  all functions  of the form \textcolor{black}{$f(x)=x^{2^k+1}+h(x)$ (for any odd degree  $h(x)$, with a mild condition in few cases}),   are  not exceptional APN, extending  substantially several recent results towards the resolution of the  stated conjecture.}

%An APN function $f:\mathbb{F}_{2^n}\rightarrow \mathbb{F}_{2^n}$ is
%called exceptional APN if it is also APN on infinitely many
%extensions of $\mathbb{F}_{2^n}$. It was conjectured that the only
%exceptional APN functions are the Gold and Kasami-Welch functions.
%It is proved that a polynomial function of odd degree is not
%exceptional APN provided the degree is not a Gold number $(2^k+1)$
%or a Kasami-Welch number $(2^{2k}-2^k+1)$ [1]. Recent results show
%that for the Gold degree case, some infinite families of polynomials
%can not be exceptional APN.

%In this work we extend substantially
%this recent results by providing new infinite families of Gold
%degree polynomials which are not exceptional APN. As an example we
%prove that a polynomial of the form $f(x)=x^{2^k+1}+h(x)$, where
%$\deg(h)=3 (mod 4)< 2^k+1$, is not exceptional APN.

%In this work we extend substantially
%this recent results by providing new infinite families of Gold
%degree polynomials which are not exceptional APN. As an example

%have at most 2 solutions.
%f(x+a)-f(x)=b

%%%Q in L?

\section{Introduction}
%\textcolor{black}{We use the color to indicate the changes}.

\begin{defn}\label{def1}
\textcolor{black}{Let} $L=\mathbb{F}_q$, with $q=p^n$ for some
positive \textcolor{black}{integer} $n$. A function $f:L\rightarrow L$ is said to be
\textbf{almost perfect nonlinear }(APN) on $L$ if for all $a,b \in
L$, $a \neq 0$, the equation
\begin{equation}
f(x+a)-f(x)=b
\end{equation}
has at most 2 solutions.
\end{defn}

Equivalently, $f$ \textcolor{black} is APN if \textcolor{black}{the cardinality of}
the set $\{f(x+a)\textcolor{black}{-}f(x):x \in L \}$ \textcolor{black}{is} at least
$2^{n-1}$ for each $a\in L^{\ast}$. These  \textcolor{black}{kind}
of functions \textcolor{black}{ are} important in applications to cryptography,
where they are  used as S-Boxes\textcolor{black}{,} because they
are  resistant to differential cryptanalytic  attacks. The best
known examples of APN functions are the Gold
\textcolor{black}{function} $f(x)=x^{2^k+1}$\textcolor{black}{,}  and the Kasami-Welch
function\textcolor{black}{ $f(x)=x^{2^{2k}-2^k+1}$,} that  are APN on any field
$\mathbb{F}_{2^n}$\textcolor{black}{,} where $k$
\textcolor{black}{and} $n$ are relatively
\textcolor{black}{prime}. The \textcolor{black}{Welch} function $f(x)=x^{2^r+3}$
\textcolor{black}{is also APN on $\mathbb{F}_{2^n}$, where $n=2r+1$.}\\
The APN property is invariant under some transformations of
functions.\\
A function $f:L \rightarrow L$    is linear if and only if $f$ is a
linearized polynomial over $L$, that is,
$$\sum_{i=0}^{n-1}c_ix^{\textcolor{black}{p}^i}, \,\,\,\,\,\, c_i\in L.$$
The sum of a linear function and a constant is called an affine
function.\\
 Two functions are called {\it extended affine equivalent}
\textcolor{black}{(EA equivalence)},  $f \equiv g$  \textcolor{black}{
(EA}),  if
$f=A_1\circ g \circ A_2+A$\textcolor{black}{,} where $A_1$ \textcolor{black}{and} $A_2$ are \textcolor{black}{linear maps
and $A$ is a constant function}. A second equivalence is the CCZ
equivalence, $f \equiv g (CCZ)$ if the graph of $f$ can be obtained
from the graph of $g$ by an affine permutation. EA equivalence is a
particular case of CCZ equivalence; two CCZ equivalent functions
preserve the APN property (for more details see [5]). In
general,  CCZ equivalence is very difficult to establish.

%Mishra: proving--->delte
%The research on the APN functions and its applications became very
%important for the mathematicians in the last years.
%Mishra--->for--->form
%Mishra the the---->the
Until 2006, the
list of known affine inequivalent APN functions on $L=GF(2^n)$ was
rather short; the list consisted only of monomial functions of the
form  $f(x)=x^t$\textcolor{black}{,} for some positive \textcolor{black}{integer} $t$. In February 2006,
\textcolor{black}{Y. Edel}, \textcolor{black}{G. Kyureghyan} and \textcolor{black}{A. Pott} [6]  established (by an exhaustive search) the  first
example of an APN function not equivalent to any of
the known monomial APN  functions. Their  example is
$$x^3+ux^{36} \in GF(2^{10})[x],$$
where $u\in wGF(2^5)^* \cup w^2GF(2^5)^*$ and $w$ has order 3, is APN on $GF(2^{10})$.
Since then,   several new  infinite families of polynomial APN functions have been discovered.

\section{Exceptional APN functions}\label{Exceptional-APN-functions}
\textcolor{black}{
In this section we discuss the main conjecture on
exceptional APN functions. An almost perfect nonlinear (APN) function  (necessarily  a polynomial function)  on a finite field   $\mathbb{F}$  is called exceptional APN,   if it is also APN on infinitely many extensions
of  $\mathbb{F}$.  In this article we consider the most studied case of $\mathbb{F}=\mathbb{F}_{2^n}$.
 A conjecture of Janwa-Wilson and McGuire-Janwa-Wilson (1993/1996),    settled  in 2011,  was that the only
exceptional  monomial APN functions are the monomials $x^n$, where $n=2^i+1$  or
$n={2^{2i}-2^i+1} $ (the Gold or the Kasami exponents respectively). 
The  Welch functions  $f(x)=x^{2^r+3}$  are known to  be APN on $\mathbb{F}_{2^n}$ for $n=2r+1$, but they are  not exceptional APN functions.
The same is the case for the new class of  APN functions discovered  by Edel et. in  \cite{NEW}, discussed earlier.
Since then several  have been proved (see section \ref{sec:Recent-results},  establishing that many infinite classes of functions are not exceptional APN functions (see \cite{AMR}, \cite{BM}, \cite{NEW2}, \cite{J}, \cite{R}, \cite{R2}, and others.)
 Based on these results, an open conjecture  states that any  exceptional  APN function is one of the monomials just described.
}

\textcolor{black}{
Our theorems \ref{thm:3mod4} and \ref{thm:1mod4} in  section \ref{sec:Our-main-results} demonstrate that  all functions  of the form $f(x)=x^{2^k+1}+h(x)$ (for any odd degree  $h(x)$, with a mild condition in few cases),   are  not exceptional APN, extending  substantially several recent results towards the resolution of the  stated conjecture.}  

\begin{defn}\label{def2}
Let $L=\mathbb{F}_q$, \textcolor{black}{with} $q=p^n$ for some positive integer
$n$. A function $f:L\rightarrow L$ is called \textbf{exceptional
APN} if $f$ is APN on $L$ and also on
infinitely many extensions of $L$.
\end{defn}
%Mishra: Make---->made
Aubry, McGuire and Rodier  \textcolor{black}{\cite{AMR} } made  the following \textcolor{black}{conjecture}.

\textbf{CONJECTURE:} Up to equivalence, the Gold and Kasami-Welch
functions are the only exceptional APN functions.

\begin{prop}[\textcolor{black}{Rodier \cite{R}}]
\label{prop1}
 Let $L=\mathbb{F}_q$, with $q=2^n$. A function $f:L\rightarrow L$
is APN if and only if the affine surface $X$ \textcolor{black}{with} equation
$$f(x)+f(y)+f(z)+f(x+y+z)=0$$
has all its rational points contained in the surface
$(x+y)(x+z)(y+z)=0.$
\end{prop}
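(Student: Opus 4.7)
The plan is to translate the APN condition directly into a statement about the surface $X$, the translation being essentially tautological once one makes the right change of variables. The key algebraic step is to introduce $a=x+y$, so that $y=x+a$ and $x+y+z=z+a$, and the surface equation
\[
f(x)+f(y)+f(z)+f(x+y+z)=0
\]
becomes
\[
f(x)+f(x+a)=f(z)+f(z+a).
\]
This is precisely the equation controlling the number of solutions of $f(X+a)+f(X)=b$ for $b=f(x)+f(x+a)$, so the APN condition for the pair $(a,b)$ is directly visible.

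First I would dispose of the degenerate case $a=0$: here $y=x$ and $x+y+z=z$, so the surface equation holds identically and $x+y=0$, placing the point on the forbidden surface $(x+y)(x+z)(y+z)=0$. Then I would handle $a\neq 0$. Suppose $f$ is APN. If $(x,y,z)$ is a rational point of $X$ with $a\neq 0$, then $x$ and $z$ are both solutions of $f(X+a)+f(X)=b$ with $b:=f(x)+f(x+a)$. By the APN hypothesis this equation has at most two solutions, and since $x$ and $x+a$ are automatically two distinct solutions, $z$ must equal $x$ or $x+a=y$; in either case one of the factors $x+z$ or $y+z$ vanishes. Hence every rational point of $X$ lies on $(x+y)(x+z)(y+z)=0$.

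For the converse I would argue by contrapositive: if $f$ is not APN, then there exist $a\neq 0$, $b$, and three distinct solutions $x_1,x_2,x_3$ of $f(X+a)+f(X)=b$. Since solutions come in pairs $\{x,x+a\}$, at least two of these, call them $x$ and $z$, satisfy $z\notin\{x,x+a\}$. Setting $y=x+a$ produces a triple $(x,y,z)$ on $X$ with $x\neq y$, $x\neq z$, and $y\neq z$, contradicting the containment in $(x+y)(x+z)(y+z)=0$.

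There is no real obstacle here beyond the bookkeeping: matching the three possible coincidences among $x,y,z$ to the three linear factors of $(x+y)(x+z)(y+z)$, and remembering that we are working in characteristic $2$ so that subtraction and addition coincide and solution pairs automatically have the form $\{x,x+a\}$. The only mild care needed is to note that over $\mathbb{F}_{2^n}$ the affine surface is considered with its $\mathbb{F}_{2^n}$-rational points, which is precisely the set on which the APN property is tested, so the two statements refer to exactly the same set of triples.
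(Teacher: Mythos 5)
Your proof is correct, and it is the standard argument (the paper itself states this proposition with a citation to Rodier and gives no proof): the substitution $y=x+a$, $x+y+z=z+a$ turns the surface equation into $f(x)+f(x+a)=f(z)+f(z+a)$, and both directions then follow exactly as you describe, including the correct handling of the degenerate case $x+y=0$ and the pairing of solutions $\{x,x+a\}$ in the contrapositive.
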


%Mishra: , guarantees--->guarantee
Using this characterization and the \textcolor{black}{ bounds} of Lang-Weil and
Ghorpade-Lachaud that  guarantee many rational points on a surface
for all $n$ sufficiently large, \textcolor{black}{one} can prove the following
theorem.
% a exceptional ---->an exceptional

\begin{thm} [\textcolor{black}{See Rodier \cite{R}}]
\label{thm1}
Let $f:L\rightarrow L$, $L=\mathbb{F}_{2^n}$, a polynomial
function of degree $d$. \textcolor{black}{Suppose \textcolor{black}{the} surface $X$}

$$\frac{f(x)+f(y)+f(z)+f(x+y+z)}{(x+y)(x+z)(y+z)}=0$$
is absolutely irreducible (or has an absolutely irreducible
component over $L$), then $f$ is not an exceptional APN function.
\end{thm}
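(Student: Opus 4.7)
The plan is to combine Proposition \ref{prop1} with a point-counting argument on the surface $X$. By Proposition \ref{prop1}, showing that $f$ fails to be APN on the extension $\mathbb{F}_{2^{nm}}$ amounts to producing a rational point of the affine surface $S$ cut out by $f(x)+f(y)+f(z)+f(x+y+z)=0$ that does \emph{not} lie on the trivial locus $T:(x+y)(x+z)(y+z)=0$. In characteristic $2$ one checks directly that the polynomial $f(x)+f(y)+f(z)+f(x+y+z)$ vanishes whenever any two of the variables coincide, so $T\subset S$ and the quotient by $(x+y)(x+z)(y+z)$ is a polynomial cutting out a residual surface $X$. Thus to prove the theorem it suffices to exhibit, for all sufficiently large $m$, an $\mathbb{F}_{2^{nm}}$-point on $X$ that is not already on $T$.

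First I would invoke the hypothesis: $X$ has an absolutely irreducible component $X_0$ defined over $L=\mathbb{F}_{2^n}$. Being defined over $L$ is exactly the condition that lets us apply the Lang--Weil estimate (and its refinement by Ghorpade--Lachaud for higher-dimensional varieties) uniformly over extensions. These bounds give, for $q=2^{nm}$,
\[
\bigl|X_0(\mathbb{F}_q)\bigr|\;\geq\;q^{2}-C_1\,q^{3/2},
\]
where $C_1$ depends only on $\deg(X_0)$, hence only on $\deg(f)$, and in particular not on $m$.

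Next I would bound the points of $X_0$ that sit in the trivial locus. Since $X_0$ is an irreducible component of $X$, and $X$ was defined precisely by dividing out $(x+y)(x+z)(y+z)$, the component $X_0$ is \emph{not} contained in any of the three planes composing $T$. Therefore $X_0\cap T$ is a proper closed subvariety of the two-dimensional $X_0$, hence of dimension at most one. A standard trivial bound (e.g.\ Schwartz--Zippel, or the easy part of Lang--Weil for curves) yields
\[
\bigl|(X_0\cap T)(\mathbb{F}_q)\bigr|\;\leq\;C_2\,q,
\]
again with $C_2$ depending only on $\deg(f)$.

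Combining the two estimates, for all sufficiently large $m$ we have $q^{2}-C_1 q^{3/2}>C_2 q$, so $X_0(\mathbb{F}_q)\setminus T(\mathbb{F}_q)$ is nonempty. By Proposition \ref{prop1} this forces $f$ to fail the APN property on $\mathbb{F}_{2^{nm}}$, and since this holds for cofinitely many $m$, $f$ cannot be exceptional APN. The only real obstacle is the inexplicit dependence of $C_1,C_2$ on $\deg(f)$; one has to be careful that the Lang--Weil (or Ghorpade--Lachaud) constants are genuinely uniform in the extension $m$, but this is exactly what those theorems provide. The rest of the work in this program (and the content of the subsequent sections) will be the much harder step of \emph{verifying} absolute irreducibility of $X$ for the families $f(x)=x^{2^k+1}+h(x)$.
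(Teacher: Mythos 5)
Your overall strategy is exactly the one the paper intends (and that Rodier's cited proof uses): combine Proposition \ref{prop1} with the Lang--Weil/Ghorpade--Lachaud lower bound on points of an absolutely irreducible component $X_0$ defined over $L$, subtract an $O(q)$ bound for points on the trivial locus $T:(x+y)(x+z)(y+z)=0$, and conclude that for all large extensions there is a point of $X$ off $T$, hence $f$ is not APN there. The paper itself only sketches this in one sentence before Theorem \ref{thm1}, so on the main line your write-up supplies the intended argument.

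There is, however, one genuine gap: your claim that ``since $X$ was defined precisely by dividing out $(x+y)(x+z)(y+z)$, the component $X_0$ is not contained in any of the three planes'' is not justified, and in general it is false. Dividing the numerator once by each of $x+y$, $x+z$, $y+z$ does not prevent $\phi$ from still being divisible by them: for an even exponent $j=2m$ one has $x^{j}+y^{j}+z^{j}+(x+y+z)^{j}=\bigl(x^{m}+y^{m}+z^{m}+(x+y+z)^{m}\bigr)^{2}$, so $\phi_{j}=(x+y)(x+z)(y+z)\,\phi_{m}^{2}$, and e.g.\ for $f(x)=x^{6}$ one gets $\phi=(x+y)(x+z)(y+z)$ exactly. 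That $f$ is exceptional APN (it is $x^{3}$ composed with Frobenius), yet its $X$ has absolutely irreducible components over $\mathbb{F}_2$ --- all of them planes inside $T$. This shows the step ``$X_0\cap T$ is proper, hence a curve'' genuinely needs the hypothesis that the component is \emph{not} one of (or contained in) the planes $x+y=0$, $x+z=0$, $y+z=0$; Rodier's theorem is stated with precisely this exclusion, which the paper's paraphrase (and your proof) omits. With that caveat added, your estimate $|X_0(\mathbb{F}_q)|\geq q^{2}-C_1q^{3/2}$ versus $|(X_0\cap T)(\mathbb{F}_q)|\leq C_2q$ closes the argument; and in the paper's applications the issue is harmless, since there $\phi$ is shown to be absolutely irreducible of degree $2^{k}-2\geq 2$, so it cannot be a plane.
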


\textcolor{black}{From now on, we let}
\begin{equation}
\phi(x,y,z)=\frac{f(x)+f(y)+f(z)+f(x+y+z)}{(x+y)(x+z)(y+z)}
\end{equation}

\begin{equation}
 \phi_j(x,y,z)=\frac{x^j+y^j+z^j+(x+y+z)^j}{(x+y)(x+z)(y+z)}
\end{equation}

%BM: polynomials---->polynomial , and comma  before f

\section{Recent results}\label{sec:Recent-results}
Aubry, McGuire and Rodier  \textcolor{black}{\cite{AMR} } proved the following  result that  provides infinite families of polynomial functions that are
not exceptional APN. As before,  let $L=\mathbb{F}_q$, with $q=2^n$ \textcolor{black}{and}
$f:L\rightarrow L$ a polynomial function.

\begin{thm}\label{thm2}
\textcolor{black}{If the degree of  $f$ is odd
and not a Gold or a Kasami-Welch number\textcolor{black}{,} 
then $f$ is not APN for all sufficiently large extensions of $L$}.

\end{thm}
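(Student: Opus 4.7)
The plan is to invoke Theorem \ref{thm1}: it suffices to exhibit an absolutely irreducible factor of $\phi(x,y,z)$ defined over $L$ whenever $\deg f = d$ is odd and $d$ is not of the form $2^i+1$ or $2^{2i}-2^i+1$. Write $f(x)=c\,x^d+g(x)$ with $c\ne 0$ and $\deg g<d$. Then the leading homogeneous form of $\phi(x,y,z)$ in total degree $d-3$ is $c\,\phi_d(x,y,z)$, while the remaining terms come from $g$ and have strictly smaller total degree. A standard lifting principle in this setting (looking at the components of the projective closure that meet the line at infinity $T=0$, or equivalently a Bezout-type argument applied to the intersection of $\phi=0$ with the plane at infinity) says that if $\phi_d$ admits an absolutely irreducible factor $h$ with $h^2\nmid\phi_d$ and $h$ defined over $\mathbb{F}_2$, then $\phi$ itself has an absolutely irreducible factor defined over $L$.

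Thus the entire argument reduces to a purely combinatorial-algebraic statement about the universal polynomial $\phi_d$: for every odd $d$ that is not a Gold or a Kasami-Welch number, $\phi_d$ has a simple absolutely irreducible factor over $\mathbb{F}_2$. I would split this according to $d\bmod 4$. When $d\equiv 3\pmod 4$, I would aim to show that $\phi_d$ is itself absolutely irreducible. A natural way is to restrict $\phi_d$ to a generic plane section (for instance set $z=0$ or $z=1$) and verify, using the explicit multinomial expansion of $(x+y+z)^d$ in characteristic $2$, that the resulting bivariate polynomial is irreducible over $\overline{\mathbb{F}_2}$ (e.g.\ via a Newton polygon argument or by showing a single simple root pattern on a coordinate axis). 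Since Gold and Kasami exponents are $\equiv 1\pmod 4$ (other than the trivial value $d=3$, for which $\phi_3=1$), this case already disposes of the entire residue class $d\equiv 3\pmod 4$.

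The harder case is $d\equiv 1\pmod 4$, since both the Gold exponents $2^k+1$ and the Kasami-Welch exponents $2^{2k}-2^k+1$ lie in this class for $k\ge 2$. Here I would argue by contradiction: assume $\phi_d$ has no simple absolutely irreducible factor over $\mathbb{F}_2$. Using the identity $\phi_d(x,y,z)=\phi_d(x+y+z,y,z)$ together with the involution symmetries of $\phi_d$, one can pull out the linear forms $x+y$, $x+z$, $y+z$ (and their Frobenius twists) that are responsible for the Gold-type factorizations; one shows that these linear forms can account for a complete factorization of $\phi_d$ into linear or quadratic pieces only when $d$ is exactly of Gold or Kasami-Welch type. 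The key tool is a divisibility analysis on the zero locus of $\phi_d$ along the diagonals $y=z$, $x=y$, $x=z$ and along the line $x+y+z=0$, comparing multiplicities predicted by the putative factorization with those one actually computes from the multinomial identity in characteristic $2$.

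The main obstacle will be precisely this last step: ruling out a pathological factorization of $\phi_d$ into factors of Gold/Kasami shape when $d$ is neither a Gold nor a Kasami exponent. It requires a careful bookkeeping of how the exponents of the Frobenius-twisted linear forms multiply inside $\phi_d$, and matching this against the $2$-adic structure of $d$ and $d-1$. Once this combinatorial identification is complete, the theorem follows by combining the irreducibility/simple-factor statement for $\phi_d$, the lifting principle above, and Theorem \ref{thm1}, which together imply that $\phi$ is absolutely irreducible over $L$ and hence $f$ fails to be APN on all sufficiently large extensions of $L$.
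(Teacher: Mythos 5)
Your overall framing is the right one, and it is in fact how this theorem is proved in the literature: Theorem \ref{thm2} is quoted by this paper from Aubry--McGuire--Rodier \cite{AMR}, and their argument is exactly your first step --- the hyperplane-at-infinity lifting lemma (a reduced absolutely irreducible component of the degree-$(d-3)$ part $\phi_d$ defined over the ground field lifts to an absolutely irreducible component of $\phi$), combined with Theorem \ref{thm1} and Lang--Weil. So the reduction to a statement about the single homogeneous polynomial $\phi_d$ is correct and matches the source.

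The genuine gap is everything after that reduction. The statement you still need --- that for every odd $d$ which is not a Gold or Kasami--Welch number, $\phi_d$ has a reduced absolutely irreducible factor over $\mathbb{F}_2$ --- is not a lemma one can dispatch with a Newton-polygon remark or a ``bookkeeping of Frobenius-twisted linear forms''; it is precisely the Janwa--Wilson/Janwa--McGuire--Wilson body of results (\cite{JMW1}, \cite{JMW2}) together with the Hernando--McGuire theorem \cite{CONJ}, i.e.\ the resolution of the monomial conjecture, which is a substantial paper in its own right (singularity analysis of the projective curve $\phi_d=0$ plus delicate B\'ezout-type counting). For $d\equiv 3\pmod 4$, $d>3$, absolute irreducibility of $\phi_d$ is indeed true (Janwa--Wilson), but your sketch of why is not a proof. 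For $d\equiv 1\pmod 4$ your plan is also aimed at the wrong target: you cannot hope to exclude nontrivial factorizations by ``pulling out linear forms of Gold/Kasami shape,'' because the Kasami-type factors $P_\alpha$ have degree $2^k+1$ (not linear or quadratic), and because $\phi_d$ genuinely does factor over $\mathbb{F}_2$ for some non-exceptional $d$ (e.g.\ $d=205$, as noted in the paper); the most one can prove is the existence of some absolutely irreducible factor, and that existence is the hard content you acknowledge as ``the main obstacle.'' As written, then, the proposal reduces the theorem to an unproved statement that is essentially equivalent to the deep cited results, so it does not constitute a proof; the honest version of your argument is: apply the lifting lemma of \cite{AMR} and then invoke \cite{JMW1}, \cite{JMW2} and \cite{CONJ} for the required factor of $\phi_d$.
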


For the even degree case, they proved the following:

\begin{thm} \label{thm3}
\textcolor{black}{If the degree of  $f$ is $2e$ with
$e$ odd, and if $f$ contains an odd degree term,  
then $f$ is not
APN for all sufficiently large extensions of $L$}.

\end{thm}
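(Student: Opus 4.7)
The plan is to apply Theorem \ref{thm1} by producing an absolutely irreducible factor of $\phi$ defined over $L$ whose zero locus is not contained in $\{(x+y)(x+z)(y+z)=0\}$.

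Split $f=f_E+f_O$ into its even- and odd-degree parts. In characteristic $2$ every element of $L$ is a square, so $f_E(x)=h(x)^2$ for some $h\in L[x]$ of degree $e$; setting $M(x,y,z)=h(x)+h(y)+h(z)+h(x+y+z)$, the Frobenius identity gives that the contribution of $f_E$ to the numerator $f(x)+f(y)+f(z)+f(x+y+z)$ is exactly $M^2$. A direct check shows $M$ vanishes on each of $x=y$, $x=z$, $y=z$, so $G:=(x+y)(x+z)(y+z)$ divides $M$; write $M=G M_1$. Since $\phi_1=0$, this yields
\begin{equation*}
\phi \;=\; G\,M_1^{\,2}\;+\;\phi_O,\qquad \phi_O\;=\;\sum_{i\text{ odd},\,i\ge 3} a_i\,\phi_i,
\end{equation*}
where the $a_i$ are the coefficients of $f$.

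Next I analyze the leading (top total-degree) form of $\phi$. The leading term $x^{2e}$ of $f$ produces $\phi_{2e}=G\,\phi_e^{\,2}$. Writing $N_i(x,y,z)=x^i+y^i+z^i+(x+y+z)^i$ so that $\phi_i=N_i/G$, a partial-derivative computation in characteristic $2$ gives $\partial N_e/\partial y|_{y=x}=x^{e-1}+z^{e-1}\ne 0$ (using that $e$ is odd), so $(x+y)$ divides $N_e$ exactly once and $(x+y)\nmid\phi_e$; hence $v_{x+y}(\phi_{2e})=1$. The same computation applied to $N_i$ for odd $i\ge 3$ yields $\phi_i(x,x,z)=(x^{i-1}+z^{i-1})/(x+z)^2$, a nonzero polynomial of degree exactly $i-3$. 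Because these degrees are distinct, the polynomials $\{\phi_i(x,x,z)\}_{i\text{ odd},\,i\ge 3}$ are linearly independent, and the hypothesis that $f$ has an odd-degree term (absorbing any linear term into an affine equivalence) forces $\phi_O(x,x,z)\ne 0$; in particular $(x+y)\nmid\phi_O$.

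Now $(x+y)\nmid\phi$: otherwise, the symmetry of $\phi$ in $x,y,z$ would give $G\mid\phi$, hence $G\mid\phi-G M_1^{\,2}=\phi_O$, contradicting the previous paragraph. Factor $\phi=\prod_{j=1}^{s}P_j$ into absolutely irreducible polynomials over $\bar L$; then $\phi_{2e}=\prod_j (P_j)_{\mathrm{top}}$, and because $v_{x+y}(\phi_{2e})=1$, exactly one factor $P_1$ has $(x+y)$ in its leading form. The Galois group $\mathrm{Gal}(\bar L/L)$ permutes the $P_j$ but fixes $x+y\in\mathbb{F}_2[x,y,z]$, so by uniqueness $P_1\in L[x,y,z]$. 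Finally $P_1\ne x+y$ since $(x+y)\nmid\phi$, and $P_1\ne x+z,\,y+z$ since their leading forms do not contain $x+y$. Thus the zero locus of $P_1$ is an absolutely irreducible component of $\phi=0$, defined over $L$ and not contained in $\{G=0\}$, and Theorem \ref{thm1} yields that $f$ is not exceptional APN. The main technical hurdle is the characteristic-$2$ derivative/leading-form calculation that simultaneously delivers $v_{x+y}(\phi_e)=0$ and the nonvanishing of $\phi_i(x,x,z)$ for odd $i\ge 3$; once those are in place, the implication $(x+y)\mid\phi\Rightarrow G\mid\phi_O$ and the Galois-descent argument identifying $P_1$ are both formal.
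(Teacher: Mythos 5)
Your argument is correct and is essentially the original argument for this statement: the paper states this theorem without proof, citing Aubry--McGuire--Rodier \cite{AMR}, and your route --- the decomposition $\phi=G\,M_1^{2}+\phi_O$ via writing the even part of $f$ as a square, the observation that $x+y$ occurs with multiplicity exactly one in the leading form $\phi_{2e}=G\,\phi_e^{2}$ because $e$ is odd, and the Galois-descent identification of a unique absolutely irreducible factor over $L$ that is none of the planes $x+y$, $x+z$, $y+z$ --- is the standard proof of that result. One caveat: ``contains an odd degree term'' must be read as an odd degree term of degree at least $3$ (equivalently, invoke the paper's convention that terms of degree a power of two are removed by EA-equivalence), since otherwise $f(x)=x^{6}+x$, which is EA-equivalent to the Gold function $x^{3}$ and hence exceptional APN, satisfies the literal hypotheses; your parenthetical about absorbing the linear term is what carries this point, because if the only odd-degree term is linear then $\phi_O=0$ and both your argument and the conclusion fail.
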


\begin{thm} \label{thm4}
\textcolor{black}{If the degree of $f$ is $4e$ with $e\geq 7$ and $e\equiv 3 \pmod 4$, then $f$ is not
APN for all sufficiently large extensions of $L$}.
\end{thm}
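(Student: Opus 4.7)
By Theorem~\ref{thm1}, it suffices to exhibit an absolutely irreducible factor of $\phi(x,y,z)$ over $\overline{L}$. Decompose $f(x)=x^{4e}+g(x)$ with $\deg g<4e$, so that the leading form of $\phi$ is $\phi_{4e}$. My plan is to factor $\phi_{4e}$ explicitly using the characteristic-$2$ Frobenius, to identify an absolutely irreducible piece by appealing to the already-settled monomial case, and then to lift that piece to a factor of $\phi$ itself.

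\textbf{Factoring the leading form.} In characteristic $2$ we have $(x+y+z)^{4}=x^{4}+y^{4}+z^{4}$, and hence $(x+y+z)^{4e}=(x^{4}+y^{4}+z^{4})^{e}$. Setting $X=x^{4}$, $Y=y^{4}$, $Z=z^{4}$, the numerator of $\phi_{4e}$ becomes
\[
X^{e}+Y^{e}+Z^{e}+(X+Y+Z)^{e}=\phi_{e}(X,Y,Z)(X+Y)(X+Z)(Y+Z).
\]
Using $X+Y=(x+y)^{4}$, etc., together with the fact that $\phi_{e}$ has $\mathbb{F}_{2}$-coefficients so that $\phi_{e}(x^{4},y^{4},z^{4})=\phi_{e}(x,y,z)^{4}$, one obtains
\[
\phi_{4e}(x,y,z)=\phi_{e}(x,y,z)^{4}\cdot\bigl[(x+y)(x+z)(y+z)\bigr]^{3}.
\]
For $e\geq 7$ with $e\equiv 3\pmod{4}$, the exponent $e$ is neither a Gold number $2^{i}+1$ (which is $\equiv 1\pmod{4}$ for $i\geq 2$) nor a Kasami number $2^{2i}-2^{i}+1$ (similarly $\equiv 1\pmod{4}$ for $i\geq 2$). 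Therefore, by the monomial APN case (Janwa--Wilson, equivalently Theorem~\ref{thm2} applied to $x^{e}$), $\phi_{e}(x,y,z)$ is absolutely irreducible over $\overline{\mathbb{F}_{2}}$, so $\phi_{4e}$ already possesses a well-understood absolutely irreducible factor.

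\textbf{Lifting to $\phi$: the main obstacle.} The hard step is promoting this factor of the leading form $\phi_{4e}$ to an absolutely irreducible factor of the full surface $\phi$. Because $\phi_{e}$ occurs with multiplicity $4$ in $\phi_{4e}$ and each linear factor $(x+y),(x+z),(y+z)$ with multiplicity $3$, the standard Hensel-type lifting for simple factors does not apply directly. The strategy I would follow is to filter $\phi$ by total degree, adding the contributions $\phi_{j}$ attached to each monomial $x^{j}$ of $g$ one at a time, and to track how these successive perturbations modify the factorization of $\phi_{4e}$; the goal is to show that at some stage a reduced absolutely irreducible component of $\phi$ is produced. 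The numerical hypotheses $e\geq 7$ and $e\equiv 3\pmod{4}$ are exactly what is needed to avoid the coincidences that would either destroy the irreducibility of $\phi_{e}$ or cause its orbit under perturbation to coalesce with the three linear factors, so that a bona fide absolutely irreducible factor of $\phi$ survives. Theorem~\ref{thm1} then concludes that $f$ is not APN on all sufficiently large extensions of $L$.
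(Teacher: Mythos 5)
There is a genuine gap: your argument stops exactly where the theorem begins. The computation $\phi_{4e}=\phi_e^4\,[(x+y)(x+z)(y+z)]^3$ is correct and easy (it is just the Frobenius applied to the numerator), and the observation that $e\geq 7$, $e\equiv 3\pmod 4$ rules out Gold and Kasami exponents is fine. But the entire content of the theorem is the ``lifting'' step, and in your write-up it is only announced (``the strategy I would follow\dots the goal is to show\dots''), not carried out. Since every irreducible factor of the leading form occurs with multiplicity at least $3$, no simple-factor lifting or Hensel-type argument applies, and a degree-by-degree perturbation analysis of the kind used for the Gold-degree theorems in this paper requires knowing which homogeneous pieces $a_j\phi_j$ vanish and a delicate case analysis of how $\phi_e^4 L^3$ can split as $P_sQ_t$; nothing in your sketch shows that an absolutely irreducible component actually survives, and it is not a routine matter --- for Gold-degree perturbations of comparable size the analogous conclusion genuinely fails (the obstacle at $\deg g=2^{k-1}+2$ discussed after Theorem~\ref{thm:AMR}), so the numerical hypotheses alone do not do the work by magic. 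Note also that Theorem~\ref{thm1} needs an absolutely irreducible component \emph{defined over $L$} (so that Lang--Weil applies); ``an absolutely irreducible factor over $\overline{L}$'' always exists and is not sufficient as stated.

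Two further points. First, the justification you give for the absolute irreducibility of $\phi_e$ is logically backwards: non-exceptionality of the monomial $x^e$ (Theorem~\ref{thm2}) does \emph{not} imply that $\phi_e$ is absolutely irreducible --- the paper itself recalls that $\phi_{205}$ factors even though $x^{205}$ is not exceptional. The correct ingredient is the Janwa--Wilson theorem that $\phi_e$ is absolutely irreducible for $e\equiv 3\pmod 4$, $e>3$ (proved via the hyperplane section $y=z$, cf.\ Lemma~\ref{lem2}(b)), so cite that result, not ``the monomial APN case.'' Second, be aware that the paper you are working from does not prove this statement at all; it is quoted from Aubry--McGuire--Rodier \cite{AMR}, whose proof of the degree-$4e$ case is precisely a careful analysis of how a hypothetical factorization of $\phi$ interacts with the multiple factors of $\phi_e^4L^3$ --- i.e., the step your proposal leaves open.
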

They also found results for Gold degree polynomials.

\begin{thm} \label{thm:AMR}
Suppose $f(x)=x^{2^k+1}+g(x)$\textcolor{black}{,} where $\deg(g)\leq
2^{k-1}+1$. Let $g(x)=\sum_{j=0}^{2^{k-1}+1}a_jx^j$. Suppose that
there exists a nonzero coefficient $a_j$ of $g$ such that
$\phi_j(x,y,z)$ is absolutely irreducible. Then $\phi(x,y,z)$ is
absolutely irreducible and $f$ is not exceptional APN.
\end{thm}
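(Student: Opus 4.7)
The plan is to invoke Theorem~\ref{thm1}: it suffices to prove $\phi(x,y,z)$ is absolutely irreducible over $\overline{\mathbb{F}_q}$, whence $f$ is not exceptional APN. I argue by contradiction. First I decompose $\phi$ by total degree. Since $\deg \phi_{2^k+1} = 2^k - 2$ while $\deg \phi_j = j - 3 \leq 2^{k-1} - 2$ for every $j \leq 2^{k-1}+1$, the leading homogeneous part of $\phi$ is exactly $\phi_{2^k+1}$, and a wide degree gap separates this top part from the next nonzero homogeneous part $a_{j_0}\phi_{j_0}$, where $j_0$ is the largest index with $a_{j_0} \neq 0$. All homogeneous components of $\phi$ strictly between degrees $j_0 - 3$ and $2^k - 2$ vanish.

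Suppose $\phi = P \cdot Q$ with $\deg P, \deg Q \geq 1$ in $\overline{\mathbb{F}_q}[x,y,z]$. Writing $P = \sum_i P_i$ and $Q = \sum_j Q_j$ as sums of homogeneous parts with leading parts $P_d$ and $Q_e$ (so $d + e = 2^k - 2$), equating top-degree terms yields $P_d Q_e = \phi_{2^k+1}$. I would then invoke the classical factorization of the Gold polynomial: $\phi_{2^k+1}$ splits over $\overline{\mathbb{F}_q}$ into pairwise distinct absolutely irreducible linear factors, of shape $x + \alpha y + (\alpha+1)z$ for appropriate $\alpha$. Consequently $P_d$ and $Q_e$ are squarefree and $\gcd(P_d, Q_e) = 1$.

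Next, the vanishing homogeneous components of $\phi$ between degrees $j_0 - 2$ and $2^k - 3$ yield the chain of equations $\sum_{i+j=m} P_i Q_j = 0$ for each such $m$. I would unwind these recursively: at step $s = 1, 2, \ldots$, the coprimality $\gcd(P_d, Q_e) = 1$ together with the equation $P_d Q_{e-s} + P_{d-s} Q_e + \text{(middle terms)} = 0$ forces $P_{d-s} = c_s P_d$ and $Q_{e-s} = -c_s Q_e$ for a suitable scalar $c_s$, the middle terms being absorbed by prior recursion steps. Carrying this recursion down to the degree $j_0 - 3$ equation transforms $\sum_{i+j = j_0-3} P_i Q_j = a_{j_0}\phi_{j_0}$ into an identity forcing $P_d$ (or $Q_e$) to divide $a_{j_0}\phi_{j_0}$. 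Since $\phi_{j_0}$ is absolutely irreducible with $\deg \phi_{j_0} = j_0 - 3 < 2^k - 2 = \deg(P_d Q_e)$, one of $P_d, Q_e$ must be a nonzero constant, contradicting nontriviality of the factorization.

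The main obstacle is the recursive unwinding step, which requires careful coprimality-based elimination using the pairwise distinct linear factors of $\phi_{2^k+1}$. The wide degree gap $(2^k - 2) - (j_0 - 3) \geq 2^{k-1}$ provides enough room for the recursion to propagate cleanly down to the degree $j_0 - 3$ equation, and the final divisibility argument hinges on the absolute irreducibility of $\phi_{j_0}$. The key structural input is the factorization of the Gold polynomial into pairwise coprime linear factors, as established in the work of Aubry--McGuire--Rodier.
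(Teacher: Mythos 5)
Your setup is sound and matches the technique this paper uses for its own Theorems \ref{thm:obstacle} and \ref{thm:3mod4} (the statement you are proving is itself quoted from \cite{AMR}, so the paper gives no proof of it): reduce to absolute irreducibility via Theorem \ref{thm1}, write $\phi=PQ$, equate homogeneous parts, get $P_dQ_e=\phi_{2^k+1}$ with $P_d,Q_e$ coprime because the Gold form splits into distinct linear factors, and use the degree gap to kill intermediate components. (One small slip there: from $P_dQ_{e-s}+P_{d-s}Q_e=0$ and coprimality you get $P_d\mid P_{d-s}$, and since $\deg P_{d-s}<\deg P_d$ this forces $P_{d-s}=0$ and then $Q_{e-s}=0$ outright; your ``$P_{d-s}=c_sP_d$'' is only consistent with $c_s=0$, so state the vanishing directly.)

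The genuine gap is in the endgame. You run the recursion only down to the degree $j_0-3$ equation, where $j_0$ is the \emph{largest} index with $a_{j_0}\neq 0$, and your contradiction needs $\phi_{j_0}$ to be absolutely irreducible. But the hypothesis only says that \emph{some} nonzero coefficient $a_j$ has $\phi_j$ absolutely irreducible; the top one need not. For example, take $g(x)=x^9+ax^7$ with $3\mid k$: then $j_0=9$ is a Gold exponent, $\phi_9$ is a product of linear forms all of which also divide $\phi_{2^k+1}$ (since $\mathbb{F}_8\subseteq\mathbb{F}_{2^k}$), so your final equation $P_{d-e_0}Q_e=a_9\phi_9$ (note $e\le 2^{k-1}-1<e_0=2^k+1-j_0$, so the term $P_dQ_{e-e_0}$ is vacuous and it is $Q_e$, not $P_d$, that divides) yields no contradiction, even though the theorem applies via $j=7$. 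The missing idea is to exploit the full gap guaranteed by $\deg g\le 2^{k-1}+1$: all components of $\phi$ of degree strictly between $2^{k-1}-2$ and $2^k-2$ vanish, and since $e\le 2^{k-1}-1$ the recursion then kills $Q_{e-1},\dots,Q_0$ entirely, so $Q=Q_e$ is homogeneous and divides $\phi_{2^k+1}$. Hence some linear form $x+\alpha y+(\alpha+1)z$ divides $\phi$, and being homogeneous it divides \emph{every} homogeneous component $a_j\phi_j$; now the hypothesized absolutely irreducible $\phi_j$ (of degree $j-3\ge 2$, so with no linear factor) gives the contradiction, for whichever $j$ it occurs at. With that last twist your argument closes; as written it proves only the special case where the top-degree term of $g$ carries the absolutely irreducible $\phi_{j_0}$.
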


In this last theorem the authors showed that the bound $\deg(g)\leq
2^{k-1}+1$ is best possible, in the sense that there is an example
with $\deg(g)= 2^{k-1}+2$ where $\phi(x,y,z)$ is not absolutely
irreducible [12].
%BM: provides--->provide
\section{\textcolor{black}{Large Classes of} Gold degree families that are not exceptional APN}
In this section we will state and prove \textcolor{black}{several} results that  provide
families of Gold degree polynomials that  are not exceptional APN. \textcolor{black}{We will accomplish this by proving that  the surface $\phi(x,y,z)=0$ related to the
polynomial $f(x)$ is absolutely irreducible. }

\textcolor{black}{One of our \textcolor{black}{result} \textcolor{black}{is that}  all functions  of the form \textcolor{black}{$f(x)=x^{2^k+1}+h(x)$ (for any odd degree  $h(x)$, with a mild condition in few cases}),   are  not exceptional APN, extending  substantially several recent results towards the resolution of the  stated conjecture.}

\textcolor{black}{We begin with the} following facts,
due to Janwa and Wilson  \textcolor{black}{ \cite{JMW1}},
about the Gold, Kasami-Welch and Welch \textcolor{black}{ functions.}

If $f(x)=x^{2^k+1}$ is a Gold function, then
\begin{equation}
\phi(x,y,z)=\prod_{\alpha \in F_{2^k}-F_2}(x+\alpha y+(\alpha+1)z)\textcolor{black}{.}
\end{equation}
If $f(x)=x^{2^{2k}-2^k+1}$ \textcolor{black}{is} a Kasami Welch function, then
\begin{equation}
\phi(x,y,z)=\prod_{\alpha \in F_{2^k}-F_2}P_{\alpha}(x,y,z)\textcolor{black}{,}
\end{equation}
where $P_{\alpha}(x,y,z)$ is absolutely irreducible of degree
$2^k+1$ over $GF(2^k)$.\\
If $f(x)=x^{2^k+3}$ is a Welch function, then $\phi(x,y,z)$ is
absolutely irreducible for $k>1$.\\

\textcolor{black}{We will frequently use the following lemma.}

\begin{lem} \label{lem1}
\textcolor{black}{For} an integer  $k > 1$, let $l=2^k+1$, $m=2^{2k}-2^k+1$ \textcolor{black}{and}
$n=2^k+3$ \textcolor{black}{be} a Gold, Kasami-Welch and Welch numbers respectively\textcolor{black}{.}
\textcolor{black}{Then} $(\phi_l,\phi_m)=1$, $(\phi_l,\phi_n)=1$ \textcolor{black}{and} $(\phi_m,\phi_n)=1$. Also:\\
a) If $l_1=2^{k_1}+1$ \textcolor{black}{and} $l_2=2^{k_2}+1$ are different Gold numbers \textcolor{black}{such that} $(k_1,k_2)=1,$
\textcolor{black}{then $(\phi_{l_1},\phi_{l_2})=1$.}\\
b)$(\phi_{m_1},\phi_{m_2})=1$ for different Kasami-Welch numbers
$m_1$ \textcolor{black}{and} $m_2$.\\
c)$(\phi_{n_1},\phi_{n_2})=1$ for different Welch numbers $n_1$ \textcolor{black}{and} $n_2$.
\end{lem}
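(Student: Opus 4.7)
The plan is to leverage the explicit factorizations stated just above the lemma together with a degree-comparison argument on irreducible components. Since $\phi_l$ is a product of linear forms (the Gold factorization), $\phi_m$ is a product of absolutely irreducible factors of degree $2^k+1$ (the Kasami-Welch factorization), and $\phi_n$ is itself absolutely irreducible of degree $n-3=2^k$ for $k>1$ (Welch), two such polynomials can share a common factor in $\overline{\mathbb{F}_2}[x,y,z]$ only if one of the irreducible pieces listed on the left matches one listed on the right. I intend to rule this out each time by a numerical mismatch in degrees, except in the Gold--Gold case, where I will instead use the parameterization of the linear factors by elements of $\mathbb{F}_{2^k}$.

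For the three coprimality statements involving $l,m,n$ tied to a common $k>1$, I would argue as follows. For $(\phi_l,\phi_m)=1$: the irreducible factors of $\phi_l$ have degree $1$, while those of $\phi_m$ all have degree $2^k+1\geq 5$, so no common factor is possible. For $(\phi_l,\phi_n)=1$: $\phi_n$ is absolutely irreducible of degree $2^k\geq 4$, hence has no linear factor, and $\phi_l$ has only linear factors. For $(\phi_m,\phi_n)=1$: the irreducible $\phi_n$ has degree $2^k$, whereas every irreducible factor of $\phi_m$ has degree $2^k+1\neq 2^k$, so $\phi_n$ cannot equal any such factor, and irreducibility forces coprimality.

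For parts (a)--(c), the same philosophy applies after one more ingredient. In (a), two distinct Gold exponents give
\[
\phi_{l_1}=\prod_{\alpha\in\mathbb{F}_{2^{k_1}}\setminus\mathbb{F}_2}\bigl(x+\alpha y+(\alpha+1)z\bigr),\qquad \phi_{l_2}=\prod_{\beta\in\mathbb{F}_{2^{k_2}}\setminus\mathbb{F}_2}\bigl(x+\beta y+(\beta+1)z\bigr),
\]
and a shared linear factor would force $\alpha=\beta\in \mathbb{F}_{2^{k_1}}\cap\mathbb{F}_{2^{k_2}}=\mathbb{F}_{2^{\gcd(k_1,k_2)}}=\mathbb{F}_2$, contradicting $\alpha\notin\mathbb{F}_2$. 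In (b), two Kasami-Welch exponents coming from different parameters $k_1\neq k_2$ produce irreducible components of respective degrees $2^{k_1}+1$ and $2^{k_2}+1$, which are unequal, so no matching is possible. In (c), distinct Welch numbers $n_i=2^{k_i}+3$ yield absolutely irreducible $\phi_{n_i}$ of degrees $2^{k_i}$, again unequal, so the two irreducible polynomials are non-associate and hence coprime.

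The main (and essentially the only) potential obstacle is the Gold--Gold case (a), because there both polynomials are products of \emph{linear} factors and a naive degree comparison is useless; the assumption $\gcd(k_1,k_2)=1$ is exactly what lets the intersection-of-fields argument close it off. Everything else reduces to reading the given factorizations and matching degrees of irreducible components, which is routine.
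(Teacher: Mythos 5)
Your proof is correct and follows essentially the same route as the paper, whose entire proof is a one-line appeal to the factorizations (4), (5) and the absolute irreducibility of the Welch $\phi_n$; you simply make explicit the degree comparisons of irreducible components and the field-intersection argument $\mathbb{F}_{2^{k_1}}\cap\mathbb{F}_{2^{k_2}}=\mathbb{F}_{2^{(k_1,k_2)}}=\mathbb{F}_2$ for the Gold--Gold case, which is exactly the intended justification.
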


\begin{proof}
\textcolor{black}{
The proof of this lemma follows directly from (4), (5) and the fact
that $\phi_n$, for a Welch number $n>5$, is absolutely irreducible.
}
\end{proof}

\subsection{\textcolor{black}{Overcoming the obstacle}}\label{sec:Overcoming-the-obstable}

\textcolor{black}{Theorem \ref{thm:AMR} shows that $f(x)=x^{2^k+1}+g(x)$ with $\deg(g)= 2^{k-1}+1$ provides an obstacle to absolute irreducibility.}
Now we will show that there are cases where it is possible to increase the degree of $g(x)$ in order
to obtain new non-exceptional APN functions.\\
From now on, let $L=\mathbb{F}_{2^n}$, $f:L \rightarrow L$ and
$\phi(x,y,z)$, $\phi_j(x,y,z)$ as in (2) and (3). (The polynomial
$f$ can be considered not containing a constant term nor terms of
degree power of two, since as we commented APN property is invariant
under affine maps)\textcolor{black}{.}

\begin{thm} \label{thm:obstacle}
For $k \geq 2$ \textcolor{black}{and} $\alpha \neq 0$\textcolor{black}{,} let
$f(x)=x^{2^k+1}+\alpha x^{2^{k-1}+3}+h(x) \in L[x]$\textcolor{black}{,} where
$h(x)=\sum_{j=0}^{2^{k-1}+1}a_jx^j$ and satisfy one of the following
conditions:\\
a) $a_5=0$.\\
b) There is a non zero $a_j\phi_j$ for some $j\neq 5$.\\
Then $\phi(x,y,z)$ is absolutely irreducible.
\end{thm}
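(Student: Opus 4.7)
The plan is to establish absolute irreducibility of $\phi$ by a degree-by-degree analysis of its homogeneous components. Writing
\[
\phi = \phi_{2^k+1} + \alpha\, \phi_{2^{k-1}+3} + \sum_{j \leq 2^{k-1}+1} a_j\, \phi_j,
\]
and using that each $\phi_j$ is homogeneous of degree $j - 3$, the top part (degree $2^k - 2$) is $\phi_{2^k+1}$, which splits as a product of $2^k - 2$ distinct linear forms $x + \beta y + (\beta+1)z$, $\beta \in \mathbb{F}_{2^k} \setminus \mathbb{F}_2$; the Welch-type piece $\alpha\, \phi_{2^{k-1}+3}$ sits at degree $2^{k-1}$; and every other homogeneous piece has degree at most $2^{k-1} - 2$. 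Hence, for $k \geq 3$, all homogeneous components of $\phi$ in the range $2^{k-1}+1 \leq d \leq 2^k - 3$ vanish, producing a degree ``gap'' that drives the whole argument.

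Assume for contradiction that $\phi = PQ$ with $P, Q$ nonconstant, and decompose $P = \sum_i P_i$, $Q = \sum_j Q_j$ into homogeneous parts. The top equation $P_p Q_q = \phi_{2^k+1}$ at degree $p + q = 2^k - 2$ shows that $A := P_p$ and $B := Q_q$ are products over complementary subsets of the above linear forms, so in particular $\gcd(A, B) = 1$. I would then descend inductively through the gap: for $s = 1, 2, \ldots, 2^{k-1} - 3$, the vanishing of the degree-$(2^k - 2 - s)$ part of $\phi$ combined with the vanishings already established at previous steps collapses to
\[
A\, Q_{q-s} + P_{p-s}\, B = 0,
\]
and coprimality of $A, B$ together with the degree bound $\deg P_{p-s} < \deg A$ forces $P_{p-s} = Q_{q-s} = 0$. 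The first non-trivial equation appears at degree $2^{k-1}$ and, setting $A' := P_{p - (2^{k-1} - 2)}$ and $B' := Q_{q - (2^{k-1} - 2)}$, reads
\[
A\, B' + A'\, B \;=\; \alpha\, \phi_{2^{k-1}+3}.
\]

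The main obstacle is to rule out this Welch-level equation. For $k \geq 3$ the polynomial $\phi_{2^{k-1}+3}$ is the Welch polynomial and so is absolutely irreducible; by Lemma~\ref{lem1} it is also coprime to every factor of the Gold polynomial $\phi_{2^k+1}$. If one of the degrees $p, q$ is strictly less than $2^{k-1} - 2$, so that the corresponding primed piece vanishes, the displayed equation collapses to $A'B = \alpha\, \phi_{2^{k-1}+3}$ (or the symmetric analogue), which forces $\phi_{2^{k-1}+3}$ to share a factor with $\phi_{2^k+1}$ and contradicts Lemma~\ref{lem1}. In the remaining case $\{p, q\} \subseteq \{2^{k-1} - 1,\, 2^{k-1}\}$, with $A', B'$ of low positive degree, I would reduce the Welch-level equation modulo each linear factor $\ell$ of $A$ (respectively of $B$) and exploit absolute irreducibility of $\phi_{2^{k-1}+3}$ to force one of $A, B$ to be constant, a contradiction. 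Conditions (a) and (b) enter when the argument is propagated to degrees below $2^{k-1}$: they precisely preclude the $\phi_5$-type obstruction identified in Theorem~\ref{thm:AMR}, by ensuring that the next lower homogeneous components of $\phi$ contain an irreducible piece that cannot be absorbed into $A$ or $B$. Finally, the case $k = 2$, where the Gold and Welch exponents coincide at $5$ and the leading homogeneous piece collapses to $(1+\alpha)\phi_5$, is degenerate and must be handled by a short direct computation depending on whether $1 + \alpha$ vanishes.
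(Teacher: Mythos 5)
Your setup matches the paper's: factor $\phi=PQ$, decompose into homogeneous parts, use $P_pQ_q=\phi_{2^k+1}$ (distinct linear forms, hence coprime top pieces), exploit the degree gap to kill the intermediate components, and arrive at the Welch-level equation $AB'+A'B=\alpha\,\phi_{2^{k-1}+3}$. Your treatment of the unbalanced case (one primed piece absent, so a nonconstant product of linear factors of $\phi_{2^k+1}$ would divide $\phi_{2^{k-1}+3}$, contradicting Lemma~\ref{lem1}) is fine, and is if anything a slightly cleaner route than the paper's first case.

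The gap is in the remaining, balanced cases, i.e.\ $(p,q)=(2^{k-1}-1,2^{k-1}-1)$ or $(2^{k-1},2^{k-1}-2)$ (note your description $\{p,q\}\subseteq\{2^{k-1}-1,2^{k-1}\}$ misses the second one). There you propose to get a contradiction from the Welch-level equation alone, by reducing modulo linear factors of $A$ or $B$ and invoking absolute irreducibility of $\phi_{2^{k-1}+3}$. This cannot work: an identity $AB'+A'B=\alpha\,\phi_{2^{k-1}+3}$ with $A,B$ coprime and $A',B'$ of low positive degree is in no way contradictory, and indeed it is exactly the configuration that survives in the excluded situation $a_5\neq 0$ with all other $a_j\phi_j=0$ (compare the exceptions listed after the theorem, e.g.\ $x^{17}+a_{11}x^{11}+a_5x^5$): one can have $P=A+P_1$, $Q=B+Q_1$ with $AQ_1+P_1B=\alpha\phi_{2^{k-1}+3}$ and $P_1Q_1=a_5\phi_5$, since $\phi_5$ splits into two linear forms over $\mathbb{F}_4$. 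This is precisely why hypotheses (a)/(b) are in the statement; an argument that finishes at the Welch level would prove the theorem without them, which is too strong. The paper instead keeps descending: the degree-$s$ (and, in the second subcase, degree $s-1$ and $s-2$) equations force $P_0=Q_0=0$ --- the subcase $(2^{k-1},2^{k-1}-2)$ additionally needs $P_0Q_t=a_{2^{k-1}+1}\phi_{2^{k-1}+1}$ to be ruled out by the Gold--Gold coprimality of Lemma~\ref{lem1}(a), a step absent from your sketch --- after which every lower-degree equation reads $0=a_r\phi_r$ except the degree-two one, which reads $P_1Q_1=a_5\phi_5$ (resp.\ $P_2Q_0=a_5\phi_5$). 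Only at that point do conditions (a)/(b) act: either some $a_j\phi_j\neq 0$ with $j\neq 5$ gives an immediate contradiction, or $a_5=0$ kills one of $P_1,Q_1$ (resp.\ $P_2,Q_0$) and the Welch-level equation collapses to $P_sQ_1=\alpha\phi_{2^{k-1}+3}$ (or similar), contradicting the absolute irreducibility of the Welch polynomial or Lemma~\ref{lem1}. Your closing sentence gestures at this, but the concrete mechanism --- where (a)/(b) intervene and how the two balanced subcases differ --- is the actual content of the proof and is missing. (Also, the paper disposes of $k=2$ and $k=3$ by citation to known results rather than by the direct computation you defer.)
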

%lose--->loss
%used in ---->used in the
\begin{proof}
If $k=2$, \textcolor{black}{then}  $f(x)=x^5+b_3x^3$, \textcolor{black} {and it  is known that it is not exceptional}
APN [2].\\
If $k=3$, \textcolor{black}{then} $f(x)=x^9+a_7x^7+a_5x^5+a_3x^3$,  \textcolor{black} {and it  is known that it is not exceptional}
 APN [12].\\
Let $k>3$ and let \textcolor{black}{$\phi(x,y,z)=0$} \textcolor{black}{be} the surface \textcolor{black}{corresponding} to $f(x)$.
Suppose \textcolor{black}{that} $\phi$ is not absolutely irreducible, then $\phi(x,y,z)=P(x,y,z)Q(x,y,z)$, where $P$ \textcolor{black}{and} $Q$ are \textcolor{black}{non-constant} polynomials. \textcolor{black}{We write} $P$ \textcolor{black}{and} $Q$ \textcolor{black}{as sums} of homogeneous terms:
\begin{equation}
\sum_{j=3}^{2^{k}+1}
a_j\phi_j(x,y,z)=(P_s+P_{s-1}+...+P_0)(Q_t+Q_{t-1}+...+Q_0)\textcolor{black}{,}
\end{equation}
where $P_j$ \textcolor{black}{and} $Q_j$ are zero or homogeneous of degree $j$,
$s+t=2^{k}-2$. Assuming\textcolor{black}{,} without loss of generality\textcolor{black}{,} that $s\geq t$\textcolor{black}{.}
\textcolor{black}{Then,} $$2^{k}-2>s\geq \frac{2^{k}-2}{2} \geq t>0\textcolor{black}{.}$$
\textcolor{black}{In (6),}
\begin{equation}
P_sQ_t=\phi_{2^k+1}\textcolor{black}{,}
\end{equation}
since $\phi_{2^k+1}$ is equal to the product of
different linear factors, $P_s$ and $Q_t$ are relatively \textcolor{black}{prime}.\\
By the assumed degree of $h(x)$, the homogeneous \textcolor{black}{terms} of degree $r$,
for $2^{k-1}<r<2^{k}-2$, are equal to zero. Then\textcolor{black}{,} equating the terms
of degree $s+t-1$ gives $P_sQ_{t-1}+P_{s-1}Q_t=0$. Hence\textcolor{black}{,} we \textcolor{black}{have} $P_s$ divides $P_{s-1}Q_t$ and this implies that $P_s$ divides
$P_{s-1}$, since $P_s$ and $Q_t$ are relatively \textcolor{black}{prime}. We conclude
that $P_{s-1}=0$ as the degree of $P_{s-1}$ is less than the degree
of $P_{s}$. Then\textcolor{black}{,} we also \textcolor{black}{have} $Q_{t-1}=0$ as $P_s\neq 0$.\\
Similarly, equating the terms of degree $s+t-2, s+t-3,...,s+2$ we get:
\textcolor{black}{$$P_{s-2}=Q_{t-2}=0,$$ $$P_{s-3}=Q_{t-3}=0,$$ $$...$$ $$P_{s-(t-2)}=Q_2=0.$$}\\
The \textcolor{black}{(simplified) equations of degree $s+1$, $s$, $s-1$ and $s-2$,} respectively,
are:
$$P_sQ_1+P_{s-(t-1)}Q_t=a_{s+4}\phi_{s+4}$$
$$P_sQ_0+P_{s-t}Q_t=a_{s+3}\phi_{s+3}$$
$$P_{s-1}Q_0+P_{s-(t+1)}Q_t=a_{s+3}\phi_{s+2}$$
$$P_{s-2}Q_0+P_{s-(t+2)}Q_t=a_{s+3}\phi_{s+1}$$
We  consider two cases:
\\
\textbf{FIRST CASE}: $s > 2^{k-1}$\textcolor{black}{.}\\
For this case\textcolor{black}{,} $t<2^{k-1}-2$. From the equations of degree $s+1$ \textcolor{black}{and} $s$ we have:\\
$P_sQ_1+P_{s-(t-1)}Q_t=0$, then $P_{s-(t-1)}=Q_1=0$;
$P_sQ_0+P_{s-t}Q_t=0$, then $P_{s-t}=Q_0=0$ (using the same argument
we used in the equation $s+t-1$). Then\textcolor{black}{,} $Q=Q_t$ is homogeneous of degree
$t$ and by (7) and (6) there exist some $\alpha \in F_{2^k}-F_2$
such that $x+\alpha y+(1+\alpha)z$ divides both $\phi_{2^k+1}$ \textcolor{black}{and} $\phi_{2^{k-1}+3}$\textcolor{black}{, contradicting lemma \textcolor{black}{\ref{lem1}}.}
\\
\textbf{SECOND CASE}: $2^{k-1} \geq s \geq 2^{k-1}-1$\textcolor{black}{.}\\
For this case\textcolor{black}{,} $2^{k-1}-2 \leq t\leq 2^{k-1}-1$. Then\textcolor{black}{,} we have the subcases:\\
SUBCASE 1: ($s=t=2^{k-1}-1$)\\
We already \textcolor{black}{have}
$P_{s-1}=Q_{t-1}=P_{s-2}=Q_{t-2}=...=P_2=Q_2=0$.\\ The \textcolor{black}{equation of degree $s+1$ is}
\begin{equation}
P_sQ_1+P_1Q_t=\alpha\phi_{s+4}\textcolor{black}{.}
\end{equation}
The equation of degree $s$ is $P_sQ_0+P_0Q_s=0$. Using the argument
\textcolor{black}{from} the first case we \textcolor{black}{get} $P_0=Q_0=0$. For $r<s$, $r\neq5$, the
equation of degree $\leq r$ is of the form $0=a_r\phi_r$ (since all
$P_i, Q_i$ are zero except $P_s,Q_t$ and possibly $P_1$ \textcolor{black}{and} $Q_1$). Then\textcolor{black}{,}
if for some $j\neq 5$, $a_j\phi_j\neq 0$\textcolor{black}{,} we are done. If $a_5 =0$,
then the equation of degree \textcolor{black}{two} is $P_1Q_1=0$, \textcolor{black}{so} one of them is
equal to zero. If $P_1=0$, then the equation (8) becomes
$P_sQ_1=\alpha\phi_{s+4}$\textcolor{black}{,} \textcolor{black}{contradiction} since
$\phi_{s+4}=\phi_{2^{k-1}+3}$ is absolutely
irreducible. The case $Q_1=0$ is similar.\\
SUBCASE 2: ($s=2^{k-1}, t=2^{k-1}-2$)\\
We already \textcolor{black}{have}
$P_{s-1}=Q_{t-1}=P_{s-2}=Q_{t-2}=...=P_4=Q_2=0$.\\ The \textcolor{black}{equation of degree $s+1$ is} $P_sQ_1+P_3Q_t=0$\textcolor{black}{, so} $P_3=Q_1=0$.\\ The \textcolor{black}{equation of degree $s$ is:}
\begin{equation}
P_sQ_0+P_2Q_t=\alpha\phi_{s+3}\textcolor{black}{.}
\end{equation}
The equation of degree $s-1$ is $P_1Q_t=0$, then $P_1=0$.\\ The equation of degree $s-2$ is $P_0Q_t=a_{s+1}\phi_{s+1}$. If
$a_{s+1}\neq 0$\textcolor{black}{,} then $Q_t$ divides $\phi_{s+1}=\phi_{2^{k-1}+1}$, but $Q_t$ also divides $\phi_{2^k+1}$\textcolor{black}{, and} that  is not possible by lemma \textcolor{black}{\ref{lem1}}.
Then\textcolor{black}{,} $a_{s+1}=0$ and $P_0=0$.\\
For $r<s-2, r \neq 5$, the \textcolor{black}{equations} of degree $\leq r$ have the form
$0=a_r\phi_r$. Then\textcolor{black}{,} if for some $j\neq 5$, $a_j\phi_j\neq 0$\textcolor{black}{,} we are
done. If $a_5 =0$, then the equation of degree \textcolor{black}{two} is $P_2Q_0=0$, \textcolor{black}{so}
one of them is zero. If $Q_0=0$\textcolor{black}{,} then the equation (9) becomes
$P_2Q_t=\alpha\phi_{s+3}$\textcolor{black}{;} \textcolor{black}{contradiction} \textcolor{black}{to}  the
irreducibility of $\phi_{s+3}$. If $P_2=0$, then
$P_sQ_0=\alpha\phi_{s+3}$, but also $P_sQ_t=a_{2^k+1}\phi_{2^k+1}$\textcolor{black}{,}
contradicting lemma \textcolor{black}{\ref{lem1}}.
\end{proof}

Some families \textcolor{black} covered by this theorem:\\
$f(x)=x^{17}+a_{11}x^{11}+h(x)$\textcolor{black}{,} where $a_{11}\neq 0, \deg(h)\leq 9$,
except the case $f(x)=x^{17}+a_{11}x^{11}+a_5x^5$, $a_{11}\neq 0, a_5 \neq 0$\textcolor{black}{;}\\
$f(x)=x^{33}+a_{19}x^{19}+h(x)$\textcolor{black}{,} where $a_{19}\neq 0, \deg(h)\leq
17$, except the case $f(x)=x^{33}+a_{19}x^{19}+a_5x^5$, $a_{19}\neq 0, a_5 \neq 0$\textcolor{black}{.}

As we can notice in this theorem, we interpolate a Welch term
$\alpha x^{2^{k-1}+3}$ between the consecutive Gold terms
$x^{2^k+1}$ and $x^{2^{k-1}+1}$ of the family given in theorem \textcolor{black}{\ref{thm:AMR}}.

Including two \textcolor{black}{or} more terms would imply more cases and subcases to
consider and this is not a good idea. Next, we provide more general
families of Gold degree polynomials that  cannot be exceptional
APN.

\subsection{Hyperplane Sections}

\textcolor{black}{We first prove some results on plane sections.}
  Let us consider the intersection of the surface $\phi(x,y,z)$ with
the plane $y=z$.

\begin{lem} \label{lem2}
Let $\phi_j(x,y,z)$ be as in (3)\textcolor{black}{. Then}\\
a) For $n=2^k+1>3 $, $\phi_n(x,y,y)=(x+y)^{2^k-2}$\textcolor{black}{;}\\
b) For \textcolor{black}{$n\equiv 3 \pmod 4>3 $}, $\phi_n(x,y,y)=R(x,y)$ such that $x+y$ does not divides $R(x,y)$\textcolor{black}{;}\\
c) For \textcolor{black}{$n\equiv 1 \pmod 4>5 $}, $\phi_n(x,y,y)=(x+y)^{2^l-2}S(x,y)$, such
that $x+y$ does not divides $S(x,y)$, where $n=1+2^lm$, $l\geq 2$
and  $m > 1$ is an odd number.
\end{lem}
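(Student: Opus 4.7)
The plan is to compute $\phi_j(x,y,y)$ directly for odd $j$ via a $0/0$-style reduction, and then handle the three cases by factoring $x^{j-1}+y^{j-1}$ according to the $2$-adic valuation of $j-1$.

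First I would unwind the defining identity
\[
(x+y)(x+z)(y+z)\,\phi_j(x,y,z) \;=\; x^j+y^j+z^j+(x+y+z)^j \;=:\; N(x,y,z).
\]
Setting $z=y$ makes both sides vanish, so the direct substitution is $0/0$. The trick is to factor $(y+z)$ out of $N$ in characteristic $2$: since $N(x,y,y)=0$, we may write $N=(y+z)\widetilde N$, and the standard check gives
\[
\widetilde N(x,y,y)\;=\;\left.\frac{\partial N}{\partial z}\right|_{z=y}\;=\;j\bigl(y^{j-1}+x^{j-1}\bigr).
\]
For any odd $j$ (which is the only case relevant to APN monomial exponents and to parts (a)--(c)), this gives
\[
\phi_j(x,y,y)\;=\;\frac{\widetilde N(x,y,y)}{(x+y)^2}\;=\;\frac{x^{j-1}+y^{j-1}}{(x+y)^2}.
\]

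Next I would factor the numerator according to the $2$-adic valuation of $j-1$. Write $j-1=2^{\ell}m$ with $m$ odd; then in characteristic $2$,
\[
x^{j-1}+y^{j-1}\;=\;\bigl(x^{m}+y^{m}\bigr)^{2^{\ell}}\;=\;(x+y)^{2^{\ell}}\,q(x,y)^{2^{\ell}},
\]
where $q(x,y)=x^{m-1}+x^{m-2}y+\dots+y^{m-1}$. Substituting $x=y$ in $q$ yields $m\,x^{m-1}=x^{m-1}\neq 0$ because $m$ is odd, so $(x+y)\nmid q(x,y)$, and hence $(x+y)\nmid q(x,y)^{2^{\ell}}$. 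Combining,
\[
\phi_j(x,y,y)\;=\;(x+y)^{2^{\ell}-2}\,q(x,y)^{2^{\ell}}.
\]

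The three parts now fall out by selecting $\ell$ and $m$:
\textbf{(a)} If $n=2^k+1$, then $n-1=2^{k}$, so $\ell=k$, $m=1$, hence $q\equiv 1$ and $\phi_n(x,y,y)=(x+y)^{2^k-2}$.
\textbf{(b)} If $n\equiv 3\pmod 4$ with $n>3$, then $n-1\equiv 2\pmod 4$, so $\ell=1$ and $m=(n-1)/2$ is odd; thus $\phi_n(x,y,y)=q(x,y)^{2}=:R(x,y)$ and $(x+y)\nmid R$ by the computation above.
\textbf{(c)} If $n\equiv 1\pmod 4$ with $n>5$ and $n=1+2^{\ell}m$, $m>1$ odd, $\ell\ge 2$, the formula becomes $\phi_n(x,y,y)=(x+y)^{2^{\ell}-2}\,S(x,y)$ with $S=q^{2^{\ell}}$ not divisible by $x+y$.

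The only delicate step is justifying the passage from $\phi_j(x,y,y)$ to $(x^{j-1}+y^{j-1})/(x+y)^2$ rigorously; the cleanest way is to note that $\phi_j$ is a polynomial (not merely rational), so $\phi_j(x,y,y)$ is well-defined, and then to obtain the formula by the $(y+z)$-factorization argument above. The rest is routine monomial bookkeeping in characteristic $2$.
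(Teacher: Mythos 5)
Your proof is correct and follows essentially the same route as the paper: both reduce everything to the identity $\phi_n(x,y,y)=\frac{x^{n-1}+y^{n-1}}{(x+y)^2}$ and then factor $x^{n-1}+y^{n-1}=(x^m+y^m)^{2^{\ell}}$ according to the $2$-adic valuation of $n-1$, checking non-divisibility by $x+y$ via the substitution $x=y$ and the parity of the number of terms. The only cosmetic differences are that you obtain the key identity by the $(y+z)$-factorization/derivative evaluation rather than the paper's explicit expansion of the quotient, and you deduce part a) from the same uniform formula instead of from the Gold product factorization (4); both variants are valid.
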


\begin{proof}
The part a) follows directly from (4). For the part b) we have:
\begin{equation*}
\phi_n(x,y,z)=\frac{x^n+y^n+z^n+(x+y+z)^n}{(x+y)(x+z)(y+z)}
\end{equation*}

\begin{equation*}
\phi_n(x,y,z)=\frac{1}{(x+y)(x+z)}(\frac{y^n+z^n}{y+z}+\frac{x^n+(x+y+z)^n}{y+z})
\end{equation*}
Then
\begin{align*}
\phi_n(x,y,z)=\frac{1}{(x+y)(x+z)}[y^{n-1}+y^{n-2}z+...+yz^{n-2}+z^{n-1}
&\notag\\
&\hspace{-11 cm}
+x^{n-1}+x^{n-2}(x+y+z)+...+x(x+y+z)^{n-2}+(x+y+z)^{n-1}]
\end{align*}

Making $y=z$, we get:
$$\phi_n(x,y,y)=\frac{x^{n-1}+y^{n-1}}{(x+y)^2}$$
Since \textcolor{black}{$n\equiv 3\pmod 4>3$} then $n=3+4m$ for $m\geq 1$. Then
$$\phi_n(x,y,y)=\frac{x^{2+4m}+y^{2+4m}}{(x+y)^2}$$
$$\phi_n(x,y,z)=(\frac{x^{1+2m}+y^{1+2m}}{x+y})^2$$
$$\phi_n(x,y,z)=(x^{2m}+x^{2m-1}y+...+xy^{2m-1}+y^{2m})^2$$
Therefore, $\phi_n(x,y,y)=R(x,y)$. Now, $x+y$ divides $R(x,y)$ if
and only if $x+1$ divides
$x^{2m}+x^{2m-1}+x^{2m-2}+...+x+1$, that is not true.\\
For the part c), \textcolor{black}{with $y=z$ we get similarly}:
$$\phi_n(x,y,y)=\frac{x^{n-1}+y^{n-1}}{(x+y)^2}$$
Since $n=1+2^lm > 5$, for $m>1$ an odd integer:

$$\phi_n(x,y,y)=\frac{x^{2^lm}+y^{2^lm}}{(x+y)^2}$$
$$\phi_n(x,y,z)=\frac{(x^{m}+y^{m})^{2^l}}{(x+y)^2}$$
$$\phi_n(x,y,y)=(x+y)^{2^l-2}(x^{m-1}+x^{m-2}y+...+xy^{m-2}+y^{m-1})^{2^l}$$
Then $\phi_n(x,y,y)=(x+y)^{2^l-2}S(x,y)$. Now, $x+y$ divides
$S(x,y)$ if and only if $x+1$ divides
$x^{m-1}+x^{m-2}+x^{m-3}+...+x+1$ that is not true either.
\end{proof}

\section{\textcolor{black}{Our  main results}}\label{sec:Our-main-results}

\textcolor{black}{Now we prove  our main results  (Theorems \ref{thm:3mod4} and \ref{thm:1mod4} in this section)  establishing  that {$f(x)=x^{2^k+1}+h(x)$ (for any odd degree  $h(x)$, with a mild condition in few cases}),   are  not exceptional APN, extending  substantially several recent results towards the resolution of the  stated conjecture.}
\textcolor{black}{In particular, these two theorems, in a large measure,   extend the result of Aubrey, McGuire and Rodier \cite{AMR} as stated in Theorem \ref{thm:AMR} stated earlier. }

\begin{thm} \label{thm:3mod4}
\textcolor{black}{For} $k \geq 2$, let $f(x)=x^{2^k+1}+h(x) \in L[x]$\textcolor{black}{,}
where \textcolor{black}{$\deg(h)\equiv 3\pmod 4< 2^k+1$}. Then\textcolor{black}{,} $\phi(x,y,z)$ is absolutely irreducible.
\end{thm}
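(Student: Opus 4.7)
The plan is to argue by contradiction along the lines of Theorem \ref{thm:obstacle}, but combine the homogeneous-degree push-down with the plane section $y=z$ and Lemma \ref{lem2}(b) to exploit the parity hypothesis $\deg h\equiv 3\pmod 4$. Suppose $\phi(x,y,z)=P(x,y,z)\,Q(x,y,z)$ is a nontrivial factorization over the algebraic closure, and write $P=P_s+\cdots+P_0$, $Q=Q_t+\cdots+Q_0$ as sums of homogeneous parts, with $s+t=\deg\phi=2^k-2$ and (WLOG) $s\geq t\geq 1$. The top-degree equation $P_sQ_t=\phi_{2^k+1}$, combined with formula (4), says that $P_s$ and $Q_t$ are coprime products of distinct linear forms $x+\alpha y+(\alpha+1)z$ over $\mathbb{F}_{2^k}\setminus\mathbb{F}_2$.

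Set $d=\deg h$ and $r^*=2^k+1-d$. For $r=1,2,\ldots,r^*-1$ I would equate the homogeneous component of degree $s+t-r$ on both sides of $\phi=PQ$. The contribution from $f$ vanishes, because the corresponding exponent $j=2^k+1-r$ is either $2^k$ (a power of two, excluded from $f$) or strictly greater than $d$. The equation therefore reduces to $P_sQ_{t-r}+P_{s-r}Q_t=0$ (with either term dropped if its subscript is negative), and coprimality of $P_s$ and $Q_t$ forces $P_{s-r}=0$ (and $Q_{t-r}=0$ whenever $t-r\geq 0$). Lower-degree terms of $h$ contribute to $\phi$ only at degrees strictly below $d-3$, so they cannot disturb this push-down.

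At the critical step $r=r^*$, the equation becomes
\[
[P_sQ_{t-r^*}]+[P_{s-r^*}Q_t]=a_d\phi_d,
\]
with each bracket dropped if its subscript is negative. If both are negative (that is, $r^*>s$), then the left side is $0$ while $a_d\phi_d\neq 0$, and we are done. Otherwise I would substitute $y=z$: each linear factor $x+\alpha y+(\alpha+1)z$ of $\phi_{2^k+1}$ collapses to $x+y$, so $P_s(x,y,y)=(x+y)^s$ and $Q_t(x,y,y)=(x+y)^t$, while by Lemma \ref{lem2}(b) we have $\phi_d(x,y,y)=R(x,y)$ with $x+y\nmid R$. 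The equation becomes
\[
(x+y)^sQ_{t-r^*}(x,y,y)+P_{s-r^*}(x,y,y)(x+y)^t=a_dR(x,y),
\]
with the first summand absent if $t-r^*<0$. Since $s\geq t\geq 1$, the left side is divisible by $(x+y)^t$, hence by $x+y$; the right side is not divisible by $x+y$. This is the desired contradiction.

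The main obstacle will be the bookkeeping across the three sub-cases $r^*\leq t$, $t<r^*\leq s$, and $r^*>s$: I need to make sure the push-down runs correctly when the $Q$-side is exhausted before reaching the critical degree, and to absorb the degenerate small case $d=3$ (where $r^*=s+t$ and we land directly in the third sub-case). Once this book-keeping is in place, the parity hypothesis $d\equiv 3\pmod 4$ enters exclusively through Lemma \ref{lem2}(b), which supplies the divisibility mismatch at $y=z$ that closes the argument.
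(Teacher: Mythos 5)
Your proposal is correct and follows essentially the same route as the paper's proof: factor $\phi=PQ$ into sums of homogeneous parts, use $P_sQ_t=\phi_{2^k+1}$ and the degree gap between $\deg h$ and $2^k+1$ to kill the intermediate components, and obtain the contradiction at degree $d-3$ by restricting to the plane $y=z$ and applying Lemma \ref{lem2}(b). The only (harmless) differences are bookkeeping: you handle the subcase where $Q_{t-e}$ or $P_{s-e}$ vanishes (or does not exist) by the same $y=z$ restriction, where the paper instead invokes Lemma \ref{lem1}, and the paper separately defers $k=2,3$ to the literature.
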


\begin{proof}
As in the previous theorem, for \textcolor{black}{$k=2$ \textcolor{black}{and} $3$}, we have
the \textcolor{black} exceptional
polynomials $f(x)=x^5+b_3x^3$ \textcolor{black}{and} $f(x)=x^9+a_7x^7+a_5x^5+a_3x^3$\textcolor{black}{,} \textcolor{black}{respectively}.\\
Let $k>3$ and let $\phi(x,y,z)$ \textcolor{black}{be} the function related to $f(x)$. \textcolor{black}{As before},
suppose that $\phi(x,y,z)=P(x,y,z)Q(x,y,z)$, $P$ \textcolor{black}{and} $Q$ \textcolor{black}{non-constant}s and writing $P$ \textcolor{black}{and} $Q$ as \textcolor{black}{sums} of homogeneous terms:
\begin{equation}
\sum_{j=3}^{2^{k}+1}
a_j\phi_j(x,y,z)=(P_s+P_{s-1}+...+P_0)(Q_t+Q_{t-1}+...+Q_0)\textcolor{black}{,}
\end{equation}
where $s+t=2^{k}-2$. Assuming that $s\geq t$, then
$2^{k}-2>s\geq \frac{2^{k}-2}{2} \geq t>0$. Let $d=\deg(h)$, $e=2^k+1-d$.\\
In (10), we have:
\begin{equation}
P_sQ_t=\phi_{2^k+1}\textcolor{black}{.}
\end{equation}
Since $\phi_{2^k+1}$ is equal to the product of
different linear factors, we conclude that  $P_s$ and $Q_t$ are relatively \textcolor{black}{prime}.\\
By the assumed degree of $h(x)$, the homogeneous \textcolor{black}{terms} of degree $r$,
for $d-3<r<2^{k}-2$, are equal to zero. Then\textcolor{black}{,} equating the terms of
degree $s+t-1$ gives $P_sQ_{t-1}+P_{s-1}Q_t=0$. As in the previous
theorems, we conclude that $P_{s-1}=0$ and $Q_{t-1}=0$.\\
Similarly, equating the terms of degree $s+t-2, s+t-3,...,s+t-(e-1)=d-2$\textcolor{black}{,} we get: \textcolor{black}{$$P_{s-2}=Q_{t-2}=0,$$ $$P_{s-3}=Q_{t-3}=0,$$ $$...$$ $$P_{s-(e-1)}=Q_{t-(e-1)}=0.$$}
\textcolor{black}{The} \textcolor{black}{equation} of degree $d-3$ is:
\begin{equation}
P_sQ_{t-e}+P_{s-e}Q_t=a_{d}\phi_{d}
\end{equation}
If $Q_{t-e}=0$\textcolor{black}{,} then we \textcolor{black}{have} $P_{s-e}Q_t=a_{d}\phi_{d}$. But\textcolor{black}{,} from (11) we also have $P_sQ_t=\phi_{2^k+1}$ \textcolor{black}{which} is impossible
by lemma \textcolor{black}{\ref{lem1}}. The case $P_{s-e}=0$ is analogous.\\
Then, suppose that $Q_{t-e}\neq 0$ and $P_{s-e}\neq 0$. We
consider the intersection of $\phi(x,y,z)$ \textcolor{black}{with} the plane $y=z$.
Using lemma \textcolor{black}{\ref{lem2}}, the \textcolor{black} {equations} (11) and (12) \textcolor{black}{become}\textcolor{black}{:}
$$P_sQ_t=(x+y)^{2^k-2}$$
$$P_sQ_{t-e}+P_{s-e}Q_t=\phi_d(x,y)\textcolor{black}{.}$$
Therefore\textcolor{black}{,} $(x+y)$ divides both $P_s$ \textcolor{black}{and} $Q_t$\textcolor{black}{,} so it divides
$\phi_d(x,y)$, \textcolor{black}{contradiction} \textcolor{black}{to} part b) of lemma \textcolor{black}{\ref{lem2}}.
Therefore\textcolor{black}{,} $\phi$ is absolutely irreducible.
\end{proof}

This theorem includes theorem \textcolor{black}{\ref{thm:obstacle}} as a particular case. The next
theorem is a version of theorem \textcolor{black}{\ref{thm:3mod4}} for the case
\textcolor{black}{$d\equiv 1 \pmod 4$} with one additional
condition.

\begin{thm} \label{thm:1mod4}
For $k \geq 2$, let $f(x)=x^{2^k+1}+h(x) \in L[x]$
where $d=\deg(h)\textcolor{black}{\equiv 1} \pmod 4< 2^k+1$. If $\phi_{2^k+1}, \phi_d$ are
relatively prime, then $\phi(x,y,z)$ is absolutely irreducible.
\end{thm}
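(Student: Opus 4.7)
The plan is to parallel the proof of Theorem \ref{thm:3mod4}, using the coprimality hypothesis to supply what Lemma \ref{lem2}(b) no longer provides in the $d \equiv 1 \pmod 4$ setting. The base cases $k = 2, 3$ go through exactly as before, since they reduce to the same known small-degree classifications. For $k > 3$, I suppose $\phi = PQ$ is a nontrivial factorization and decompose $P = P_s + \cdots + P_0$, $Q = Q_t + \cdots + Q_0$ into homogeneous pieces with $s \geq t$ and $s + t = 2^k - 2$. The top-degree identity $P_s Q_t = \phi_{2^k+1}$ together with the distinct linear factorization (4) forces $\gcd(P_s, Q_t) = 1$, and the gap between $\deg h = d$ and the leading degree $2^k+1$ of $f$ yields the familiar cascade $P_{s-i} = Q_{t-i} = 0$ for $1 \leq i \leq e - 1$, where $e = 2^k + 1 - d$. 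At degree $d - 3$ this leaves
\[
P_s Q_{t-e} + P_{s-e} Q_t = a_d \phi_d.
\]

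The new ingredient is the hypothesis $\gcd(\phi_{2^k+1}, \phi_d) = 1$. I use it to dispatch the degenerate subcases. If $Q_{t-e} = 0$ (which in particular subsumes the case $t < e$), then the above identity combined with $P_s Q_t = \phi_{2^k+1}$ forces $Q_t$ to divide $\gcd(\phi_{2^k+1}, a_d \phi_d) = (1)$, contradicting $t \geq 1$; the case $P_{s-e} = 0$ is symmetric. In Theorem \ref{thm:3mod4} this step was handled by restricting to $y = z$ and invoking Lemma \ref{lem2}(b), but that argument collapses here because $\phi_d(x,y,y)$ is divisible by $(x+y)$ when $d \equiv 1 \pmod 4$; the coprimality hypothesis is precisely the replacement needed.

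For the remaining case both $P_{s-e}$ and $Q_{t-e}$ are nonzero as polynomials in three variables. I restrict to $y = z$; by parts (a) and (c) of Lemma \ref{lem2} the restricted equation reads
\[
(x+y)^s Q_{t-e}(x,y,y) + (x+y)^t P_{s-e}(x,y,y) = a_d (x+y)^{2^l - 2} S(x,y),
\]
with $d = 1 + 2^l m$, $m > 1$ odd, $l \geq 2$, and $(x+y) \nmid S$. Because $s \geq t$, the left-hand side is divisible by $(x+y)^t$, whereas the right-hand side has $(x+y)$-adic valuation exactly $2^l - 2$; this forces $t \leq 2^l - 2$.

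The main obstacle will be to convert this numerical constraint (together with the lower bound $t \geq e$ extracted above) into an outright contradiction, since Lemma \ref{lem2}(c) does not by itself yield a divisibility clash — the polynomial $(x+y)$ is permitted to divide $\phi_d(x,y,y)$. I expect to close the argument by exploiting the explicit form $S = R^{2^l}$ with $R = \sum_{i=0}^{m-1} x^{m-1-i} y^i$ furnished by Lemma \ref{lem2}(c), together with a finer bookkeeping of $(x+y)$-adic valuations in the restricted identity, and, if need be, an auxiliary plane section of the form $z = \beta y$ with $\beta \neq 0, 1$ — which, unlike $y = z$, keeps the distinct linear factors of $\phi_{2^k+1}$ separated and so allows the three-variable coprimality $\gcd(P_s, Q_t) = 1$ to remain visible after restriction. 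Once $\phi$ is shown to be absolutely irreducible, Theorem \ref{thm1} concludes that $f$ is not exceptional APN.
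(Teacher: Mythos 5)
Your proposal follows the paper's route step for step (same cascade $P_{s-i}=Q_{t-i}=0$, same use of the coprimality hypothesis to kill the subcases $t<e$, $Q_{t-e}=0$, $P_{s-e}=0$, same restriction to $y=z$ with Lemma \ref{lem2}), but it stops short of the conclusion and declares the final step an open obstacle. That last step is not an obstacle at all: it is a one-line arithmetic observation that you are missing. Write $d=1+2^{l}m$ with $m$ odd and $l\geq 2$, and $e=2^{k}+1-d=2^{k}-(d-1)$. Since $d-1=2^{l}m<2^{k}$ we have $l<k$, so $2^{l}$ divides both $2^{k}$ and $d-1$, hence $2^{l}\mid e$; as $e>0$ this gives $e\geq 2^{l}$. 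In your main case you already have $t\geq e$, so $t\geq 2^{l}>2^{l}-2$, which flatly contradicts the bound $t\leq 2^{l}-2$ that you extracted from the restricted identity. Equivalently, in the paper's phrasing: from $P_s(x,y,y)Q_t(x,y,y)=(x+y)^{2^{k}-2}$ both restricted factors are constants times powers of $x+y$ of exponent at least $t\geq 2^{l}$, so $(x+y)^{2^{l}}$ divides the left side of the degree-$(d-3)$ equation restricted to $y=z$, while the right side is $a_d(x+y)^{2^{l}-2}S(x,y)$ with $a_d\neq 0$ and $(x+y)\nmid S$, whose $(x+y)$-adic valuation is exactly $2^{l}-2$. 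Contradiction, and $\phi$ is absolutely irreducible.

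So the extra machinery you anticipate needing (the explicit shape $S=R^{2^{l}}$, finer valuation bookkeeping, or auxiliary sections $z=\beta y$) is unnecessary; the hypothesis $d\equiv 1\pmod 4$, $d<2^{k}+1$ already forces the gap $e$ to be a positive multiple of $2^{l}$, and that is exactly what closes the argument in the paper. Everything before that point in your write-up is sound and matches the paper's proof (your handling of the degenerate subcases via $\gcd(\phi_{2^{k}+1},\phi_d)=1$ is precisely the paper's first case and its $Q_{t-e}=0$ or $P_{s-e}=0$ subcase), so the only genuine defect is that you did not notice $e\geq 2^{l}$ and therefore left the proof incomplete.
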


\begin{proof}
As we did before, suppose that $\phi(x,y,z)=P(x,y,z)Q(x,y,z)$, then
\begin{equation}
\sum_{j=3}^{2^{k}+1}
a_j\phi_j(x,y,z)=(P_s+P_{s-1}+...+P_0)(Q_t+Q_{t-1}+...+Q_0)
\end{equation}
where $s+t=2^{k}-2$. Assuming that $s\geq t$, then
$2^{k}-2>s\geq \frac{2^{k}-2}{2} \geq t>0$. Let $e=2^k+1-d$, then $e\geq 2^l$.\\
In (13), we have:
\begin{equation}
P_sQ_t=\phi_{2^k+1}
\end{equation}
since $\phi_{2^k+1}$ is equal to the product of
different linear factors, $P_s$ and $Q_t$ are relatively \textcolor{black}{prime}.\\
By the assumed degree of $h(x)$, the homogeneous
\textcolor{black}{terms} of degree $r$, for $d-3<r<2^{k}-2$, are
equal to zero. Then, as in the previous theorems, equating the terms
of degree $s+t-1, s+t-2, s+t-3,...,d-2$ we get:
$P_{s-1}=Q_{t-1},P_{s-2}=Q_{t-2}=0,
P_{s-3}=Q_{t-3}=0,...,P_{s-(e-1)}=Q_{t-(e-1)}=0$.\\
The equation of degree $d-3$ is:
\begin{equation}
P_sQ_{t-e}+P_{s-e}Q_t=a_{d}\phi_{d}
\end{equation}
We  consider two cases to prove the irreducibility of $\phi$.\\
\textbf{First case} ($s>d-3$)\\
\textcolor{black}{Since} $s+t=2^k-2$, then $t<e$.
\textcolor{black}{The}  equation (15) becomes
$P_{s-e}Q_t=a_{d}\phi_{d}$. But also $P_sQ_t=\phi_{2^k+1}$
\textcolor{black}{contradiction} \textcolor{black}{to} the assumptions of the theorem.\\
\textbf{Second case} ($s \leq d-3$)\\
For this case $t \geq e$. In the equation (15), if $Q_{t-e}=0$ or
$P_{s-e}=0$ we are done, as  in theorem \textcolor{black}{\ref{thm:3mod4}}. In other case, if
$Q_{t-e}\neq 0$ and $P_{s-e}\neq 0$, we consider the intersection of
$\phi(x,y,z)$ \textcolor{black}{with} the plane $y=z$. Then, using
lemma \textcolor{black}{\ref{lem2}}, the \textcolor{black} {equations}  (14) and (15) \textcolor{black}{ become}:\\
$$P_sQ_t=(x+y)^{2^k-2}$$
$$P_sQ_{t-e}+P_{s-e}Q_t=\phi_d(x,y)$$
\textcolor{black}{where} $\phi_d(x,y)=(x+y)^{2^l-2}S(x,y)$ and $x+y$ does not divides
$S(x,y)$. Since $t \geq e \geq 2^l$ then, from the first equation,
$(x+y)^{2^l}$ divides both $P_s, Q_t$ implying that $(x+y)^{2^l}$
divides $\phi_d(x,y)$, that implies that $x+y$ divides $R(x,y)$, a
contradiction.
\end{proof}

\section{Some Applications}

%As a consequence of Theorem \ref{thm:obstacle}, we gave a list of  polynomials are not exceptional APN. 

\textcolor{black}{As a consequence of our  Theorems \ref{thm:obstacle}, \ref{thm:3mod4}, and \ref{thm:1mod4}\textcolor{black}{,} and using results from
Janwa and Wilson \cite{JMW1} and Janwa, McGuire and Wilson \cite{JMW2}, we are able to prove  the following result.}
\begin{thm}
\textcolor{black}{All polynomials of the form $f(x)=x^{65}+h(x)$ are not exceptional APN for all odd degree polynomials $h$.}
 \end{thm}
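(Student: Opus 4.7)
The plan is to split on $d = \deg h \bmod 4$ and apply the two main theorems of Section~\ref{sec:Our-main-results}, with auxiliary arguments for a short residual list. By EA equivalence I first assume $h$ has no constant term and no $2$-power-degree terms; the case $\deg h > 65$ is then handled by Theorem~\ref{thm2} (and in the rare event that $\deg h$ is itself a Gold or Kasami--Welch exponent, by applying Theorems~\ref{thm:3mod4}--\ref{thm:1mod4} with the roles of $x^{65}$ and the leading term of $h$ swapped). So I may focus on $3 \leq d \leq 63$ with $d$ odd.

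When $d \equiv 3 \pmod{4}$---the sixteen values $3, 7, 11, \dots, 63$---Theorem~\ref{thm:3mod4} applies immediately, forcing $\phi(x,y,z)$ to be absolutely irreducible, and Theorem~\ref{thm1} concludes. This subsumes in particular the case $d = 35$ originally covered by Theorem~\ref{thm:obstacle}.

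When $d \equiv 1 \pmod{4}$, I would invoke Theorem~\ref{thm:1mod4}, whose hypothesis is that $\phi_{65}$ and $\phi_d$ are coprime. Since by formula (4) one has $\phi_{65} = \prod_{\alpha \in F_{64} \setminus F_2}(x + \alpha y + (\alpha+1) z)$, a product of pairwise distinct linear forms, any common factor with $\phi_d$ must itself be a linear form of this same shape. Combining Lemma~\ref{lem1} with the characterization of the linear factors of $\phi_d$ from \cite{JMW1, JMW2}, coprimality fails exactly at the Gold exponents $d = 2^i + 1$ with $\gcd(i, 6) > 1$, namely $d \in \{5, 9, 17\}$ (where $F_{2^i} \cap F_{64} \supsetneq F_2$). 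For every other $d \equiv 1 \pmod{4}$ with $3 \leq d \leq 63$, that is, $d \in \{13, 21, 25, 29, 33, 37, 41, 45, 49, 53, 57, 61\}$, coprimality holds and Theorem~\ref{thm:1mod4} yields absolute irreducibility of $\phi$.

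The hard part will be the three residual values $d \in \{5, 9, 17\}$. Two sub-strategies cover them. Whenever $h$ carries some additional nonzero term $a_j x^j$ with $j \leq 33 = 2^{k-1} + 1$ and $\phi_j$ absolutely irreducible---for instance a Welch exponent $j \in \{7, 11, 19\}$---Theorem~\ref{thm:AMR} delivers absolute irreducibility of $\phi$ at once. What remains is the pure binomial case $f(x) = x^{65} + a x^d$ with $d \in \{5, 9, 17\}$, a sum of two Gold monomials $x^{2^6+1} + a x^{2^i+1}$ whose exponent-indices share a nontrivial common factor with $6$; for these the conclusion follows from the analysis of Gold-Gold binomials of this type in \cite{JMW1, JMW2}, which rules out APN on infinitely many extensions.
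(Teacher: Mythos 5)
Your overall architecture (split on $d=\deg h \bmod 4$, use Theorem \ref{thm:3mod4} for $d\equiv 3$, Theorem \ref{thm:1mod4} plus the Janwa--Wilson irreducibility results for $d<100$ to get coprimality for $d\equiv 1$) is exactly the route the paper indicates, and your identification of $d\in\{5,9,17\}$ as the exponents where $(\phi_{65},\phi_d)\neq 1$ is correct (for $d=13,57$ the Kasami factors have degree $>1$, and $d=33$ has $\gcd(5,6)=1$, so those are fine). The gap is in your treatment of the residual cases. First, the reduction to ``the pure binomial case'' is not valid: the failure of the hypothesis of Theorem \ref{thm:AMR} does not force $h$ to be a monomial. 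With $\deg h\in\{5,9,17\}$, $h$ may contain several terms $a_jx^j$ all of whose $\phi_j$ fail to be absolutely irreducible --- e.g.\ $j\in\{3,5,9,13,17\}$ (Gold or Kasami exponents, where $\phi_j$ is a product of linear or higher-degree factors, and $\phi_3=1$) and even exponents such as $6,10,12$, whose $\phi_j$ are powers/products of smaller $\phi$'s and linear forms. So a function like $x^{65}+a_{17}x^{17}+a_{13}x^{13}+a_5x^5+a_3x^3$ is left untouched by Theorems \ref{thm:AMR}, \ref{thm:obstacle}, \ref{thm:3mod4} and \ref{thm:1mod4}, yet it is not a binomial.

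Second, even for the genuine binomials $x^{65}+ax^{d}$ with $d\in\{5,9,17\}$, your closing citation does not carry the weight you put on it: \cite{JMW1} and \cite{JMW2} analyze monomials (hyperplane sections of Fermat varieties, weight-$4$ codewords of cyclic codes) and contain no classification of Gold--Gold binomials as non-exceptional APN. Worse, for these binomials $\phi=\phi_{65}+a\phi_{d}$ is divisible by the common linear factors $x+\alpha y+(\alpha+1)z$ with $\alpha\in(\mathbb{F}_{2^{\gcd(i,6)}}\smallsetminus\mathbb{F}_2)$, so $\phi$ is \emph{not} absolutely irreducible and the blanket ``absolute irreducibility $\Rightarrow$ not exceptional'' mechanism cannot conclude; one would have to exhibit an absolutely irreducible component defined over $L$ (the finer alternative in Theorem \ref{thm1}) or argue differently, and your proposal supplies no such argument. (The paper itself offers only a one-line attribution for this theorem, so this residual analysis is precisely what a complete proof must add; your mod-$4$ skeleton matches the paper's outline, but the cases $\deg h\in\{5,9,17\}$ with only Gold/Kasami/even lower terms remain open in your write-up. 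A minor further point: for $\deg h>65$ your ``swap the leading term'' device can leave a new $h'$ of even degree, to which neither Theorem \ref{thm:3mod4} nor \ref{thm:1mod4} applies, so that case also needs either the reading $\deg h<65$ or a separate argument.)
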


%\textcolor{black}{We omit the proof.}

\textcolor{black}{We thus extend substantially the classification of all lower degree exceptional APN functions given in literature.}

\section{\textcolor{black}{Open Problems and Future Directions}}

Janwa, McGuire and Wilson \textcolor{black}{ \cite{JMW2} }proved that, for
\textcolor{black}{$t\equiv5 \pmod 8>13$}, if the maximal cyclic
code $B_l$ ($l$ is its length) has no codewords of weight 4, then
$\phi_t(x,y,z)$ is absolutely irreducible. For many values of $l$ it
is possible that $B_l$ has no codewords of weight 4, for example, if
$l$ is a prime congruent to $\pm 3 \pmod 8.$ For more details and
infinite classes, see [9].

\textcolor{black}{Until recently, it was thought that $\phi_d(x,y,z)$ was
absolutely irreducible for the values of \textcolor{black}{$d\equiv 5 \pmod 8$.} F. Hernando
and G. McGuire, with the help of MAGMA, found that the polynomial
$g_{205}(x,y,z)$ factors on $F_2[x,y,z]$ [8].}

%\subsection{REMARKS}
In theorem  \textcolor{black}{\ref{thm:1mod4}}, the fact that $(\phi_{2^k+1},\phi_d)=1$ is a necessary
condition for $f(x)$ not to be exceptional APN. There are many cases
when $\phi_{2^k+1}$ \textcolor{black}{and} $\phi_d$ are relatively
\textcolor{black}{prime}, for example, as we commented, when
$\phi_d$ is absolutely irreducible. In [9], Janwa and Wilson proved,
using different methods including Hensel's lemma implemented on a
computer, that $\phi_d(x,y,z)$ is absolutely irreducible for $3< d
<100$, provided that $d$ is not a Gold or a Kasami-Welch number (in
such cases we know that it reduces). It is easy to show that the
irreducibility of $\phi_d(x,y,z)$ over $\mathbb{F}_2$ also implies
that $(\phi_{2^k+1},\phi_d)=1$. Using all this and the previous
theorems, we get new infinite families of Gold degree
\textcolor{black}{polynomial functions}
that are not exceptional APN.

\section*{Acknowledgement} {The authors are thankful to R.M. Wilson, B. Mishra, H.F. Mattson, Jr., and  F. Castro 
for helpful discussions.}

\end{document}